\begin{document}
\newlength{\singfigwidth}
\setlength{\singfigwidth}{1.0\columnwidth}
\newlength{\doublefigwidthone}
\setlength{\doublefigwidthone}{6.5in}
\newlength{\doublefigwidth}
\setlength{\doublefigwidth}{6.5in}

\title{Minimal spanning trees at the percolation threshold: a numerical calculation}
\author{Sean M. Sweeney}
\author{A. Alan Middleton}
\affiliation{Department of Physics, Syracuse University, Syracuse, NY 13244, USA}

\begin{abstract}
	The fractal dimension of minimal spanning trees on percolation clusters
	is estimated for dimensions $d$ up to $d=5$.  
	A robust analysis technique is developed for correlated data, as seen
	in such trees.  This should be a robust method suitable for 
	analyzing a wide array of randomly generated fractal structures.
	The trees analyzed using these techniques are built using a
	combination of Prim's and Kruskal's algorithms for finding minimal spanning trees.
	This combination reduces memory usage and allows for simulation of larger systems
	than would otherwise be possible.
	The path length fractal dimension $d_{s}$ of MSTs on critical percolation clusters 
	is found to be compatible with the predictions of the perturbation expansion 
	developed by T.~S.~Jackson and N.~Read 
	[T.~S.~Jackson and N.~Read, Phys.\ Rev.\ E \textbf{81}, 021131 (2010)].
\end{abstract}
\maketitle

\section{Introduction to the problem}\label{Introduction}

	The statistical physics and dynamics of disordered physical
	systems naturally leads to the study of fractal geometric
	objects. Physical systems with quenched disorder, i.e., those with  
	fixed random heterogeneities, often have power law
	correlations at large scales or interfaces that are fractal or self-affine.
	These structures can result from some global optimization problem that
	has connections with graph theory.  
	For example, Dijkstra's shortest path algorithm \cite{Cormen-et-al, Dijkstra} can be used to 
	find the lowest energy path of a vortex line in a disordered superconductor \cite{Petaja-et-al, Huse-Henley},
	and these paths are self-affine.
	Excitations in a disordered XY model in two dimensions \cite{Zheng-et-al, Middleton}, 
	domain walls in spin glasses \cite{Bray-Moore}, and boundaries between drainage
	basins \cite{Herrmann-et-al} are examples of physical objects with
	fractal dimension that are found by global optimization.
	
	A structure with fractal scaling that arises in physical
	contexts is the minimal spanning tree (MST). One such context is 
	a highly disordered Ising spin glass. 
	Newman and Stein showed that for the strongly disordered limit, 
	the problem of finding a ground state can be directly mapped to finding an MST \cite{Newman-Stein}.
	This mapping can be used to investigate the multiplicity of ground 
	states in the thermodynamic limit.
	Minimal spanning trees have other applications such as in transportation
	networks connecting cities \cite{Prim}, telecommunications networks 
	connecting remote computer terminals \cite{Chou-Kershenbaum}, efficient
	circuit design \cite{Loberman-Weinberger}, taxonomic reconstruction of 
	evolutionary trees \cite{Edwards-Cavalli},
	and pattern recognition in image analysis \cite{Osteen-Lin}.

	A minimal spanning tree is a structure that connects a set of nodes with
	minimum total cost.  This structure is defined for a weighted graph $G=(V,E,w)$ where $V$ is a set of 
	vertices (or nodes), $E$ is a set of edges that connect vertices, and
	$w$ is a weight function, with each edge $e \in E$ having weight $w(e)$.
	A spanning tree is a loopless 
	connected set of edges that includes all the vertices in $V$.  The minimal 
	spanning tree is the spanning tree $T$ that minimizes the total weight
	\begin{equation}
		\label{eqn:MST-defintion}
		w(T) = \sum\limits_{e \in T} w(e) \ .
	\end{equation}
	This is a well-known problem in computer science and combinatorics.
	See Ref.~\cite{Algorithms} for a general overview of MSTs and MST-finding
	algorithms.

	A notable fact about the MST is that the minimal tree is determined only by
	the numerical ordering of the weights, i.e., it is otherwise independent of their value.
	So there is a large invariance or universality for these structures; their
	geometry is independent of the disorder distributions. As long 
	as the weights $w(e)$ are independently drawn from the same distribution 
	(independent identically distributed or i.i.d. weights), the edges
	can be sorted in order of increasing weight.  This ordering alone determines
	the tree.  Two physical problems with wholly different distributions of 
	quenched disorder have MSTs with equivalent statistics.  
	
	Recently, Jackson and Read carried out an analytical calculation for the
	fractal dimension $d_{s}$ of paths in MSTs \cite{Jackson-Read-I,
	Jackson-Read-II}. They developed a perturbation expansion
	for $d_{s}$ for MSTs on critical percolation clusters in $d$ dimensions, 
	obtaining the result
	\begin{equation}
		\label{eqn:perturbation-expansion}
		d_{s} = 2 - \frac{\epsilon}{7} + \mathcal{O}({\epsilon}^{2}) \ ,
	\end{equation}
	where $\epsilon = 6 - d$ and $d=6$ is the upper critical dimension
	\cite{Jackson-Read-II}.
	In general, disordered systems are difficult to analyze and rarely yield 
	quantitative analytic results.  This prediction therefore provides a strong motivation 
	for more precise computation of fractal dimensions in spanning trees, in 
	particular, dimensions of the trees on spanning percolation clusters.
	We note that it is unclear whether the fractal dimension of these trees is affected
	by being constructed on spanning percolation clusters as opposed to
	a whole lattice.
	The work presented in this paper seeks to numerically compute $d_s$ in
	dimensions $2 \le d \le 5$ for comparison with Eq.~(\ref{eqn:perturbation-expansion}).
	This calculation employs a combination of memory-saving techniques to simulate 
	large systems as well as careful data fitting procedures to obtain precise 
	estimates for $d_{s}$ in the limit of infinite system size.

	Our final calculations for $d_s$ yield values of 1.216(1) for $d=2$, 1.46(1) for $d=3$, 
	1.65(2) for $d=4$, and 1.86(4) for $d=5$ (refer to Table \ref{table:ds}).
	We develop and utilize a ${\chi}^2$ test that accounts for the scale-invariant correlations
	found in the data, allowing for an improved ${\chi}^2$ measure and robust estimates of the
	uncertainty in the effective exponent at scale $L$, $d_{s}(L)$.  We then use linear
	and nonlinear least squares fitting to extrapolate to the infinite system size
	limit.
	We find the numerical results to be of higher precision than previous calculations
	and compatible with Eq.~(\ref{eqn:perturbation-expansion}), though more conclusive confirmation
	requires improved numerical statistics and higher order analytic work.
	We emphasize that the analysis procedure used here is generalizable and could be
	useful for other work dealing with disordered systems.

\section{Model \& Algorithms}\label{Algorithms}

	To model MSTs on percolation clusters, we simulated hypercubic lattices 
	with $L$ vertices per side with periodic boundary conditions, giving 
	$L^{d}$ total vertices and $d{L}^{d}$ total edges. 
	Edges are independently given a weight randomly distributed on the interval $(0,1)$,
	where $w(e)$ is represented by a double precision number.  Very rarely two edges are assigned 
	the same weight.  In this case, a new random weight is generated for one of these
	edges until a weight is generated that does not match any previously assigned weight.
	An important distinction to note is that rather than seeking to find a tree that 
	spans all of the vertices in these hypercubic lattices, we seek to find 
	the MST on a percolation cluster that wraps around the periodic 
	lattice (in any of the $d$ directions) at the threshold of percolation. 
	Thus the MST-finding algorithms are stopped when the tree contains a subset of vertices 
	that wraps around the system instead of including all $L^{d}$ vertices of the graph.
	The final object of interest, the MST on a critical percolation cluster, 
	contains only a small subset of the set of vertices $V$ in the lattice.
	
	In order to avoid confusion, we first note a dual usage of the term spanning.
	For an MST, spanning means that the tree includes all vertices
	in the graph for which the MST is found.  In the context of percolation
	theory, the term refers to a cluster that is spanning or percolating around 
	the lattice.  We use spanning in both senses, with the correct sense implied by the context.
	
	One naive approach to finding the MST on a graph is to iterate
	through the list of all spanning trees and 
	select the tree with the lowest total weight.  This approach
	might work on a small finite graph, but
	the number of trees grows exponentially with $L^{d}$.
	In order to analyze properties of MSTs in the thermodynamic limit of infinite 
	system size, a more efficient MST-finding algorithm (both in terms of running 
	time and maximum memory requirements) is needed.

	As background for the algorithms used to find MSTs, it might be helpful 
	to briefly review the relationship between invasion percolation and Bernoulli percolation.
	In Bernoulli (bond) percolation \cite{Broadbent-Hammersley,Frisch-Hammersley}, 
	edges in a random graph have a probability $p$ to be 
	occupied, and a probability $1-p$ to be unoccupied.  
	After determining the occupation of each edge independently, one 
	inspects the graph to check for long-range connectivity in the form of a cluster of 
	connected vertices that percolates, i.e., spans the graph.  
	On a periodic hypercubic lattice, one definition of a percolating
	or spanning cluster is a cluster that wraps around the 
	lattice along one or more of the $d$ spatial dimension axes.  
	Such a cluster contains a loop that cannot be contracted to a point.  

	Examining larger and larger systems on a macroscopic scale, this percolation transition 
	becomes clear; below some critical percolation probability $p_{c}$ only small clusters 
	are seen, but at $p = p_{c}$ large clusters that span the system begin to emerge.  Aizenman
	refers to these large critical clusters as incipient spanning clusters (ISCs),
	a term which is closely related to, but distinct from, the incipient infinite cluster 
	(IIC) \cite{Aizenman-I, Aizenman-II}.
	In the exploration of Bernoulli percolation, this occupation 
	probability $p$ is finely tuned in order to observe this critical 
	transition and the clusters that are formed at criticality.

	An alternate approach to percolation is invasion percolation 
	\cite{Wilkinson-Willemsen}.  Invasion percolation is a procedure
	that greedily occupies edges of low weight $w$ and has a termination condition.
	Invasion percolation consists of a growing cluster $C$ that is a subset of $E$.
	This cluster has a set of adjacent edges $\partial C$.  The cluster $C$
	begins as a single occupied seed site.  Additional sites are ``invaded'' 
	by choosing from $\partial C$ the lowest weight edge, $e_{l}(\partial C)$
	and expanding the invaded region to include this edge, by adding $e_{l}(\partial C)$
	to $C$.  This invasion percolation process can be repeated until 
	long-range connectivity is observed (i.e., until the invaded region
	percolates across the system).

	There has been much discussion \cite{Goodman,Damron-Sapozhnikov-Vagvolgyi,Berg-et-al} 
	on how to relate the clusters of invasion
	percolation to those of ordinary Bernoulli percolation.
  There seems to be a strong reason to believe that in the limit 
	of infinite system size, infinite connected clusters 
	created using both of these percolation methods should obey 
	the same scaling relations, meaning that the fractal path length 
	dimension $d_{s}$ should be the same for both methods.  
	Invasion percolation is extremely useful because it allows
	for the simulation of critical percolation systems without
	having any knowledge of what the value of $p_{c}$ is for
	a particular system.  Thus invasion percolation is an 
	example of self-organized criticality 
	\cite{Goodman,Damron-Sapozhnikov-Vagvolgyi}.  

	In our particular case, we're interested in analyzing the
	MSTs that lie on an ISC.  In other words, we want to 
	find the MST for the ISC.  We'll refer to this final
	object as the MTISC (minimal tree for an incipient spanning
	cluster).  Here we use two algorithms, Kruskal's algorithm \cite{Kruskal} and Prim's 
	algorithm \cite{Prim}.  While Prim's algorithm is similar 
	to the invasion percolation process described above \cite{Barabasi}, 
	Kruskal's algorithm is related to Bernoulli 
	percolation due to the global nature of the algorithm.  However, 
	neither algorithm actually requires the choice of an occupation probability $p$ 
	as a parameter, and consequently both processes exhibit self-organized critical behavior,
	as the growth is stopped when a tree is found that wraps \cite{Machta-Chayes}.
	For a more precise definition of these algorithms, refer to Appendix A.
	We next present less formal descriptions of these algorithms.

\subsection{Kruskal's algorithm}

	Kruskal's algorithm is an MST-finding algorithm that considers all
	edges of a graph.  In Kruskal's algorithm,
	we grow a forest of many small trees, merging small groups of
	trees into larger trees and eventually identifying one
	of these large trees as the MTISC, terminating the algorithm.  	
	At the start of Kruskal's algorithm, each vertex is its own 
	isolated tree.  
	All of the edges that are not yet part of a tree are sorted,
	and the edge with minimal weight is selected, excluding edges
	that would form non-wrapping loops.  When an edge is selected,
	the trees containing each of its vertices are joined into a single tree.
	This process continues until a tree grows big enough that it wraps around 
	the periodic lattice.  At this time, this tree is identified as the Kruskal's MTISC, $T_{K}$.
	This process is illustrated in Fig.~\ref{fig:kruskals_growth}.

\begin{figure}
	\includegraphics[%
		width=0.5\singfigwidth, 
		keepaspectratio]{./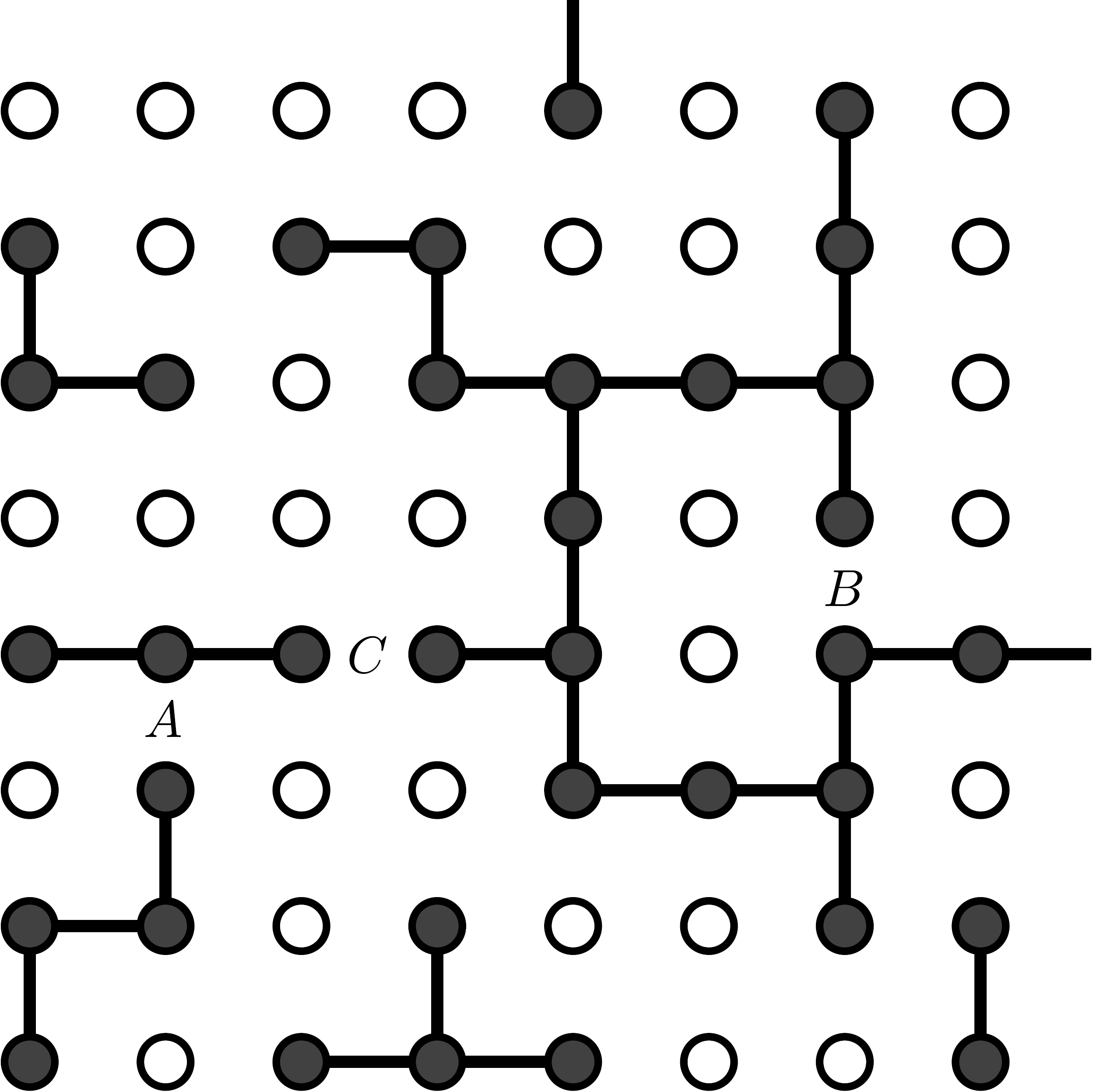} 
	\caption{\label{fig:kruskals_growth} A sample iteration of Kruskal's algorithm
		on an eight by eight periodic square lattice ($d=2$). At a given step, the
		state is a forest of trees, which includes isolated sites (open circles)
		and larger trees (connected solid circles).  During each step of the 
		algorithm, the edge with the lowest weight is selected from the remaining 
		unselected edges. For example, if edge $A$ is selected, the trees containing 
		either endpoint of edge $A$ are merged into a single tree.  If edge $B$ is 
		selected, its addition would form a non-wrapping loop (forbidden cycle), so edge $B$ is not 
		added to any tree and is removed from future consideration.  If edge $C$ is selected, 
		its addition would form a wrapping loop (allowed cycle), so the algorithm is terminated.
		The tree containing the endpoints of edge $C$ in then the Kruskal's MTISC, $T_{K}$.
	}
\end{figure}

	Kruskal's algorithm bears similarity to Bernoulli
	percolation in that it is a non-local algorithm that requires information
	about all edges of the graph. If we consider a graph with edges whose 
	weights are distributed evenly on $(0,1)$ and run Kruskal's algorithm 
	until we first examine an edge with weight $w > p_{c}$, the sites of
	$T_{K}$ are identical to those obtained from performing Bernoulli 
	percolation on the same graph with occupation probability $p_c$.

\subsection{Prim's algorithm}

	Prim's algorithm is equivalent to algorithms for loopless invasion 
	percolation \cite{Barabasi}.  At the start of the algorithm, the MST consists of a
	single seed site.  This tree grows outward from this vertex	through 
	the examination and conditional addition of adjacent edges, as
	edges adjacent to the growing tree are examined.  At any given iteration, 
	the minimal weight adjacent edge is selected 
	and incorporated into the tree, excluding edges that would lead to 
	non-wrapping loops.  The check for which edges form a loop is simple:  if
	both ends of the edge are in the current tree a loop would be formed.
	To check whether a loop is wrapping or non-wrapping, the algorithm assigns 
	to each vertex $a$ a displacement $\vec{r}_{a}$ from the origin.  
	This displacement is found for a newly added vertex $b$ by adding a vector 
	displacement $\vec{e}_{ab}$ for the edge to the vector displacement of the 
	end of the edge (site) that is already in the tree $\vec{r}_{a}$.  Thus for 
	a newly added site $b$ we have $\vec{r}_{b} = \vec{r}_{a} + \vec{e}_{ab}$.
	When a loop is encountered, this new site $b$ will already have a vector 
	displacement $\vec{r}_{b}^{\,\prime}$ defined, since this site is already in the tree.
	If the relative displacement $\left|\vec{r}_{b} - \vec{r}_{b}^{\,\prime}\right|$
	between these two labelings is greater than $L$, the loop formed is a wrapping loop.  
	When this wrapping edge is examined, the algorithm is terminated, and this 
	final tree is identified as the Prim's MTISC, $T_{P}$, as shown in Fig.~\ref{fig:prims_growth}.

\begin{figure}
	\includegraphics[%
		width=0.5\singfigwidth, 
		keepaspectratio]{./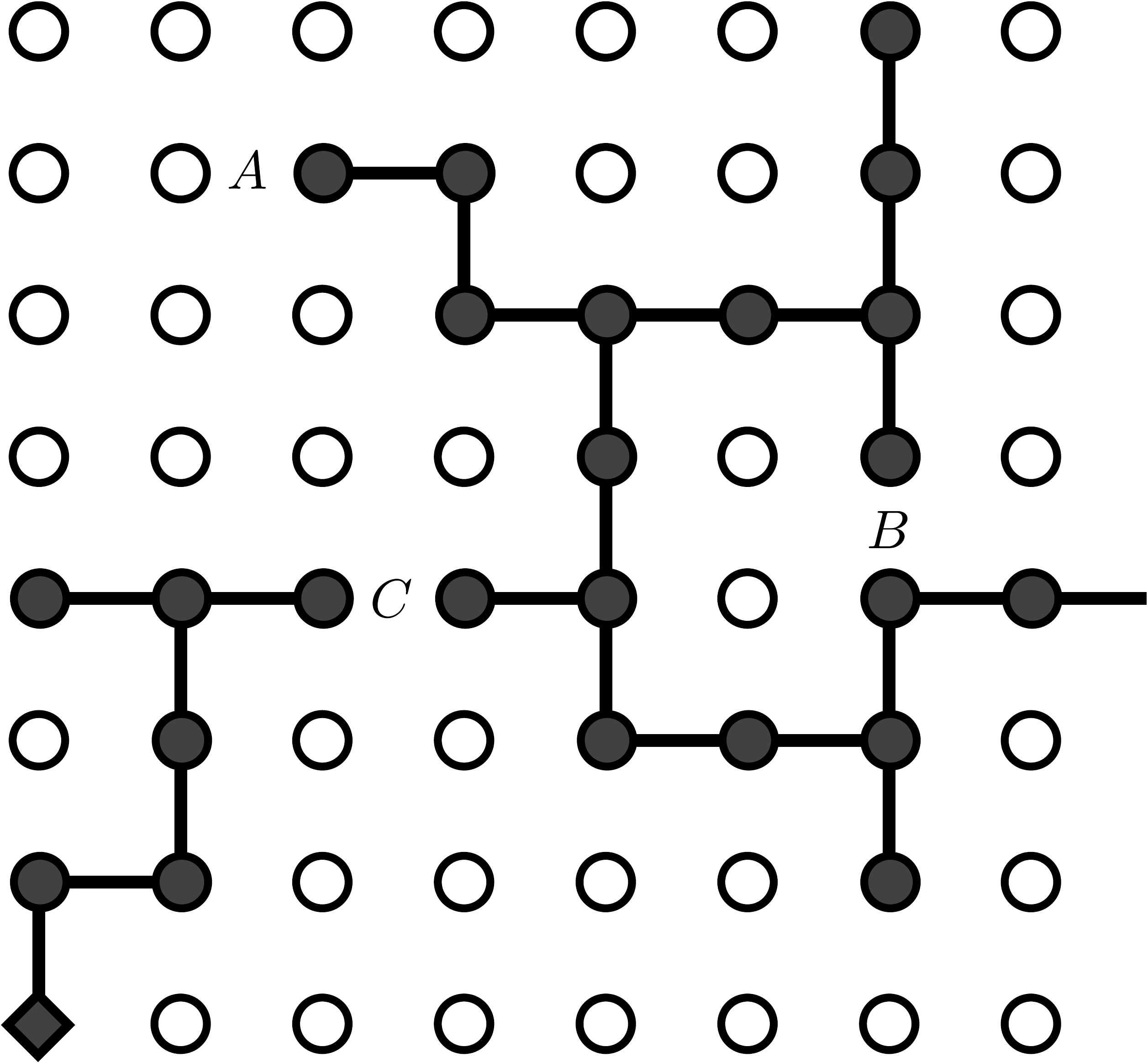}
	\caption{\label{fig:prims_growth} A sample iteration of Prim's algorithm
		on an eight by eight periodic square lattice ($d=2$). The tree is represented
		by connected solid circles and lines, while unoccupied sites are shown as open circles.  
		The diamond-shaped vertex at the bottom left of the lattice is the initial seed site $v_0$. 
		During each step of the algorithm, the edge adjacent to the current Prim's tree with 
		the lowest weight is selected.  If edge $A$ is selected, it is added 
		to the Prim's tree, along with the endpoint of $A$ that is not already 
		in the Prim's tree. If edge $B$ is selected, its addition to the Prim's 
		tree would form a non-wrapping loop (forbidden cycle), so edge $B$ is not added to the tree 
		and is removed from future consideration.  If edge $C$ has the lowest weight
		and is selected, its addition would cause the Prim's tree to wrap around 
		the periodic lattice (adding an allowed cycle), so the algorithm is terminated, 
		leaving the current Prim's tree as the Prim's MTISC, $T_{P}$.
	}
\end{figure}

	To increase efficiency when generating MTISCs, Kruskal's and Prim's algorithms can be 
	engineered to save memory by creating edges and weights only as needed during the execution 
	of the growth algorithm rather than at the start \cite{Paul-Ziff-Stanley, Grassberger}.  
	The memory saved in the Prim's method is much larger and leads to an overall more
	memory efficient method because Prim's algorithm only requires 
	the growing of a single tree rather than a large forest of trees.  
	Due to this decreased memory usage, it is possible to simulate larger
	systems with Prim's algorithm than with Kruskal's algorithm.

\subsection{Two-step method}

	We used a two-step method, combining Prim's and Kruskal's algorithms
	in order to take advantage of the increased efficiency afforded by Prim's algorithm
	when selecting a typical MTISC.
	We first generate a tree $T_{P}$ using Prim's algorithm and then apply Kruskal's
	algorithm to this tree, ensuring that the MTISC obtained has the same scaling
	as the MTISC that would be obtained by increasing $p$ large 
	enough to form a forest, one of whose trees is the final tree $T_{K}$.  	
	The data presented in this paper comes from analysis of MTISCs 
	generated by the two-step method, as well as comparisons with the intermediate
	data from $T_{P}$, before Kruskal's algorithm is applied to $T_{P}$.

	While Prim's algorithm in general takes less time and memory to find an MTISC than
	Kruskal's algorithm, the two algorithms do not find exactly the same MTISC \cite{Dobrin-Duxbury-II}.  
	As shown in detail in Appendix B, either $T_{K}$ is a subset of $T_{P}$, 
	or they do not intersect.  We are interested in constructing an MTISC that 
	is either identical to $T_{K}$ or scales the same as $T_{K}$.  The non-local 
	greedy edge selection of Kruskal's algorithm guarantees $T_{K}$ to have the 
	same sites as a Bernoulli percolation cluster.

	A technical detail to note about the two-step method is that in the final stage of 
	Prim's algorithm when a wrapping edge is selected, we add this edge to the final
	Prim's tree $T_{P}$.  This is done so that when Kruskal's algorithm is applied to
	$T_{P}$, there will be a wrapping edge to satisfy the termination condition 
	of Kruskal's algorithm.  While the inclusion of this edge temporarily destroys 
	the tree-like nature of $T_{P}$, its inclusion is essential.  

	Because Prim's growth begins at a random seed site that does not necessarily
	belong to $T_{K}$, often the $T_{K}$ is a subset
	of $T_{P}$. This of course depends which vertex $v_0$ is used as the seed site for
	Prim's growth, as the Prim's MTISC is a function of its seed site, $T_{P}(v_0)$.  
	Executing Kruskal's algorithm in the second step of this method 
	serves to ``trim off'' the part of $T_{P}(v_0)$ that includes the seed 
	site $v_0$ and that does not belong to $T_{K}$. The tree 
	derived from this procedure is termed the two-step MTISC, $T_{2}$.
	This method is illustrated in Fig.~\ref{fig:twostep_growth}.

	We show in Appendix B that this two-step procedure yields the Kruskal's
	MTISC, i.e., $T_{2}=T_{K}$, in all cases except when 
	there naturally arise multiple disjoint ISCs on the lattice at 
	criticality.  By direct simulation of these systems, we observed 
	that $T_{2} \ne T_{K}$ about 0.2\% of the time in 
	two dimensions, 1\% in three dimensions, and 5\% in four dimensions. 
	Five dimensional samples were not compared due to the large memory 
	demands of simulating minimal spanning forests in five dimensions.
	
	We suppose by standard scaling that in the case where $T_{2} \ne T_{K}$, 
	$T_{2}$ and $T_{K}$ should have the same scaling properties.  We have 
	simulated some samples in this case for systems up to size $512^2$, 
	$64^3$, and $32^4$ and observed similar scaling properties between 
	$T_{2}$ and $T_{K}$.  The memory requirements for the two-step method 
	allowed us to simulate systems of size $2048^2$ within roughly 2GB of 
	memory, $256^3$ within 1.5GB, $64^4$ within 1GB, and $48^5$ within 2GB.

\begin{figure}
	\begin{center}
		\subfigure[]{%
			\label{fig:twostep_start}
			\includegraphics[%
				width=0.5\singfigwidth, 
				keepaspectratio]{./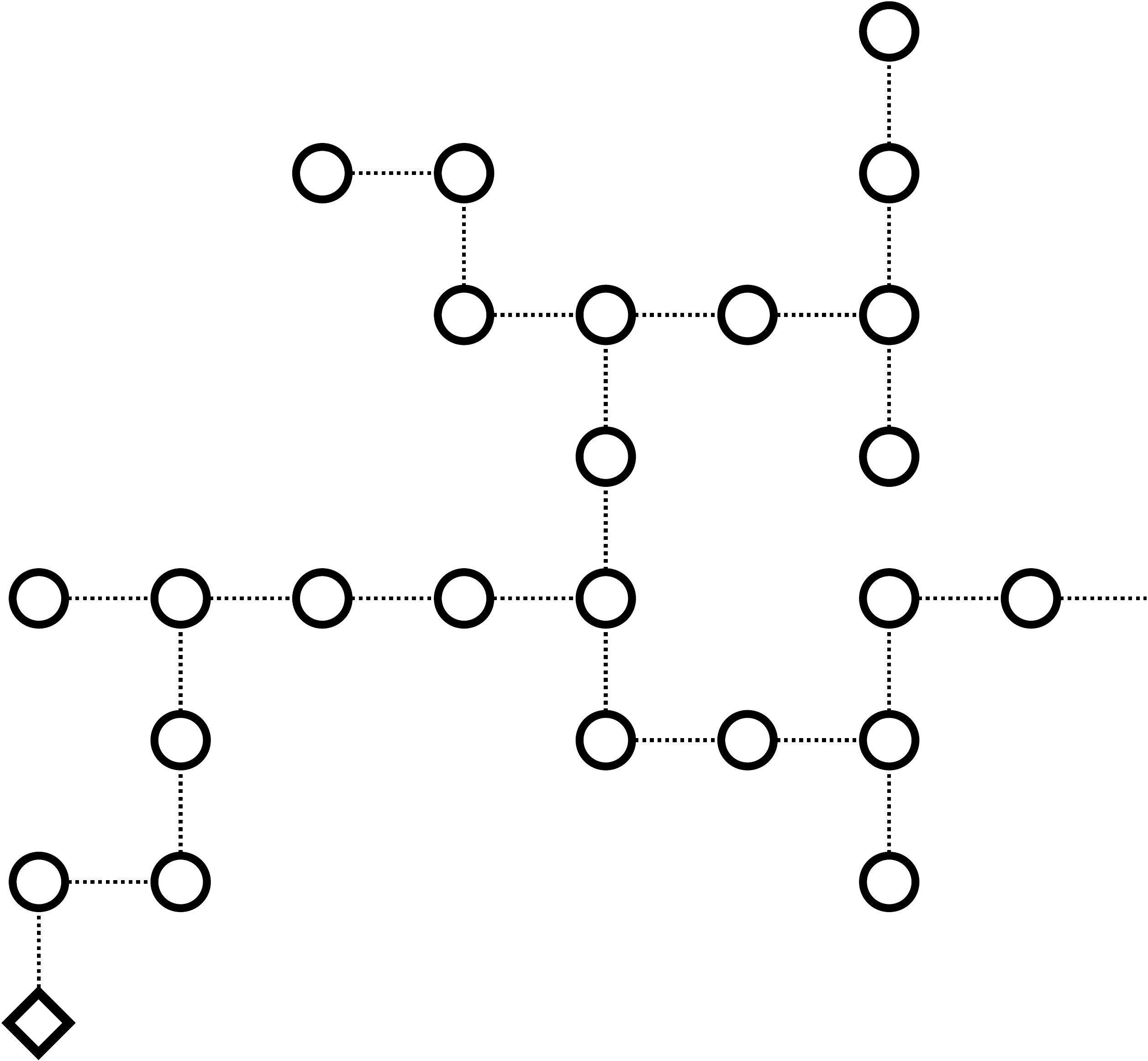}
		}%
		\subfigure[]{%
			\label{fig:twostep_intermediate}
			\includegraphics[%
				width=0.5\singfigwidth, 
				keepaspectratio]{./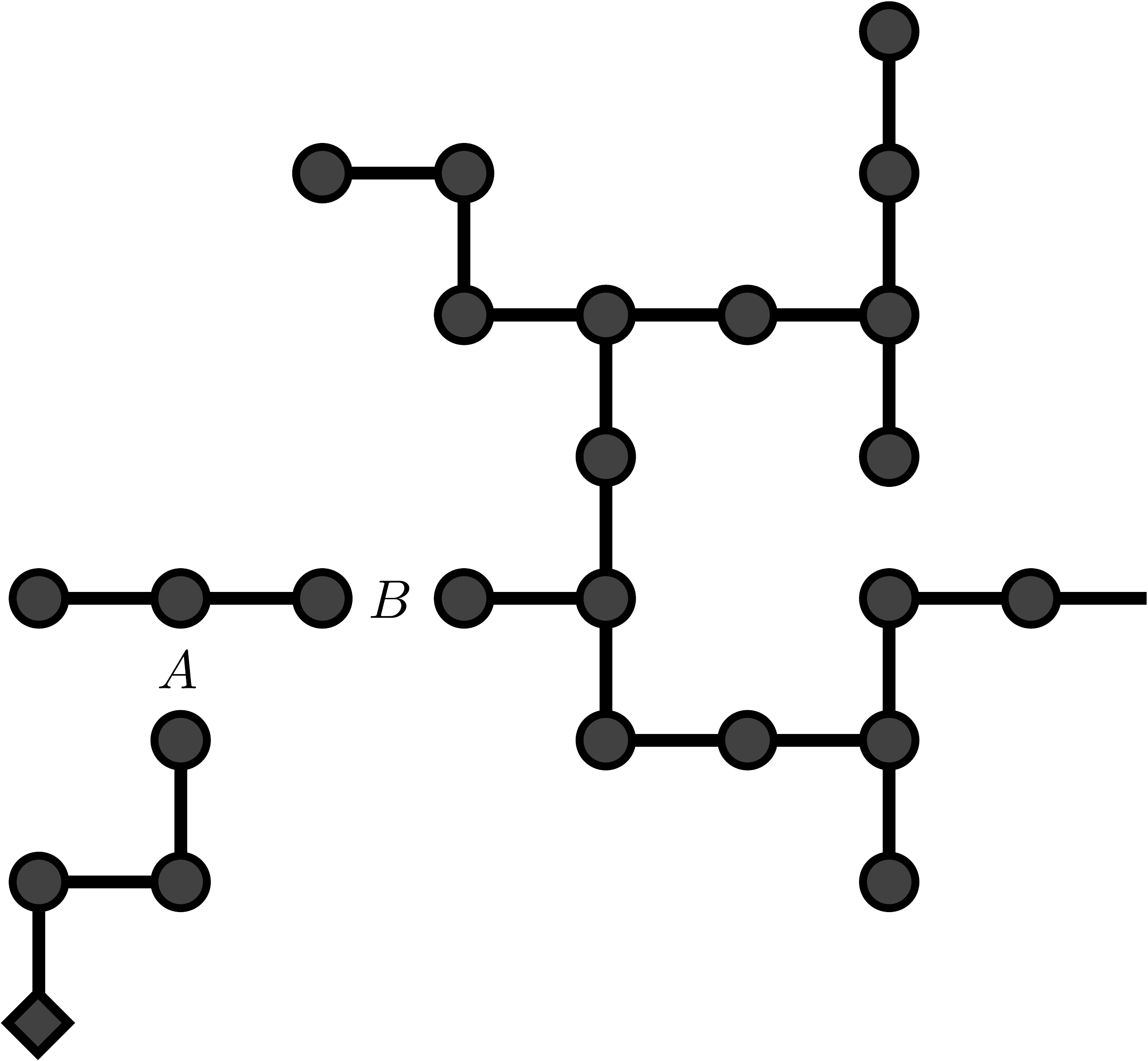}
		}%
	\end{center}
	\caption{Illustration of the two-step method.  This method grows a 
		candidate MTISC using the Prim's method and then trims the tree using
		Kruskal's algorithm.  (a) depicts $T_{P}$ 
		(plus the final wrapping edge) which is given as
		input to Kruskal's algorithm. During construction of this tree,
		only edges adjacent to the tree are tested.  The potential connectivity 
		of the unoccupied sites is explicitly shown using open circles and thin 
		lines, and for reference the Prim's seed site $v_0$ at the bottom left of the 
		lattice is represented by a diamond.  (b) shows the state of the graph after 
		running 23 steps of Kruskal's algorithm on $T_{P}$, with trees being represented by
		connected solid circles. If edge $B$ is selected before edge $A$, i.e., 
		$w(B)<w(A)$, the Prim's seed site $v_0$ will not be part of $T_{K}$, and the 
		disconnected portion of the graph containing the Prim's seed site $v_0$ will be 
		trimmed off.
	}
	\label{fig:twostep_growth}
\end{figure}

\section{Analysis/Methods}

\subsection{Methods for scaling analysis}

	To find the fractal dimension of the minimal trees on spanning percolation clusters,
	we compute the Euclidean distance $r$ and path length $s$ between some origin on the MTISC
	and other points on the MTISC.  Accurately determining the scaling in the limit of large
	clusters requires taking into account lattice effects, finite size effects, and
	statistical uncertainties.
	Given any two points on a tree, there is a
	unique path between the two points, so that the path length is easily defined.
	The choice of endpoints for the paths is chosen in a natural fashion for each
	tree.  
	
	For trees constructed using Prim's algorithm, the origin is taken to be 
	the seed site where cluster growth begins.  The set of paths connecting the origin
	to all other points in the tree is used in the statistics.  For each tree $T_{P}$, we find
	$n_{P}(s)$, the number of paths of length $s$ that start at the origin.  We use 
	$r_{P}(s)$ to indicate the average over all paths of length $s$ of the Euclidean distance
	$\left|\vec{r}\right|$. 
	
	After then running Kruskal's
	algorithm on the Prim's tree to find the trimmed MTISC $T_{2}$, a random
	origin is chosen on the trimmed tree.  All paths from this new origin are used to find
	both $r(s)$ and $n(s)$, the averaged Euclidean distance and number of paths.  
	In order to reduce the amount of data stored,
	the full set of $N$ samples is grouped into sets of $N_b$ batches of uniform size $N/N_b$. 
	The	batch-averaged quantities of $r(s)$ and $n(s)$ for these trimmed trees are calculated and stored.  
	Here we use $r(s)$ and $n(s)$ to refer to data generated by the two-step algorithm.  
	As a comparison, we also looked at the averages for Prim's trees, before trimming.

	We assume that the paths on the tree are well described as fractal. The 
	scaling of the sample-averaged $r(s)$ will then follow the relation
	\begin{equation}
		r(s) \sim s^{1/{d_{s}}} \ .
		\label{eqn:scaling-rs}
	\end{equation}
	If we write $L$ as the
	length of one side of a hypercubic system, then $s/L^{d_{s}}$ is a natural scaling
	parameter, and we expect to see finite size effects near $s/L^{d_{s}} = 1$, 
	as paths of this length approach the size of the system.  
	The standard one parameter finite size scaling assumption is then
	that $r(s)$ will scale like $s^{{1}/{d_{s}}}$ multiplied by
	some unknown function of the argument ${s^{{1}/{d_{s}}}}{L}^{-1}$.  
	The form of the scaling hypothesis is that
	\begin{equation}
		\begin{split}
		r(s) &\approx s^{1/{d_{s}}} f\left(\frac{s^{1/{d_{s}}}}{L}\right) \\
					 &\approx s^{1/{d_{s}}} g\left(\frac{s}{L^{d_{s}}}\right) \ ,
		\end{split}
	\end{equation}
	where the scaling functions $f(\omega)$ and $g(\omega)$ behave as $f(\omega) \approx c_{1}$ 
	for some constant $c_{1}$ for $\omega \ll 1$, $f(\omega) \approx 0$ as $\omega \rightarrow \infty$, 
	$g(\omega) \approx c_{2}$ for some constant $c_{2}$ for $\omega \ll 1$, 
	$g(\omega) \approx 0$ as $\omega \rightarrow \infty$.
	For more compact formulas, we define the scaled dimensionless variables
	\begin{equation}
		\rho = \frac{r(s)}{s^{1/{d_{s}}}} \ , \label{eqn:rho-definition}
	\end{equation}
	\begin{equation}
		\omega = \frac{s}{L^{d_{s}}} \ . \label{eqn:omega-definition}
	\end{equation}
\begin{figure}
	\includegraphics[%
		width=\singfigwidth, 
		keepaspectratio]{./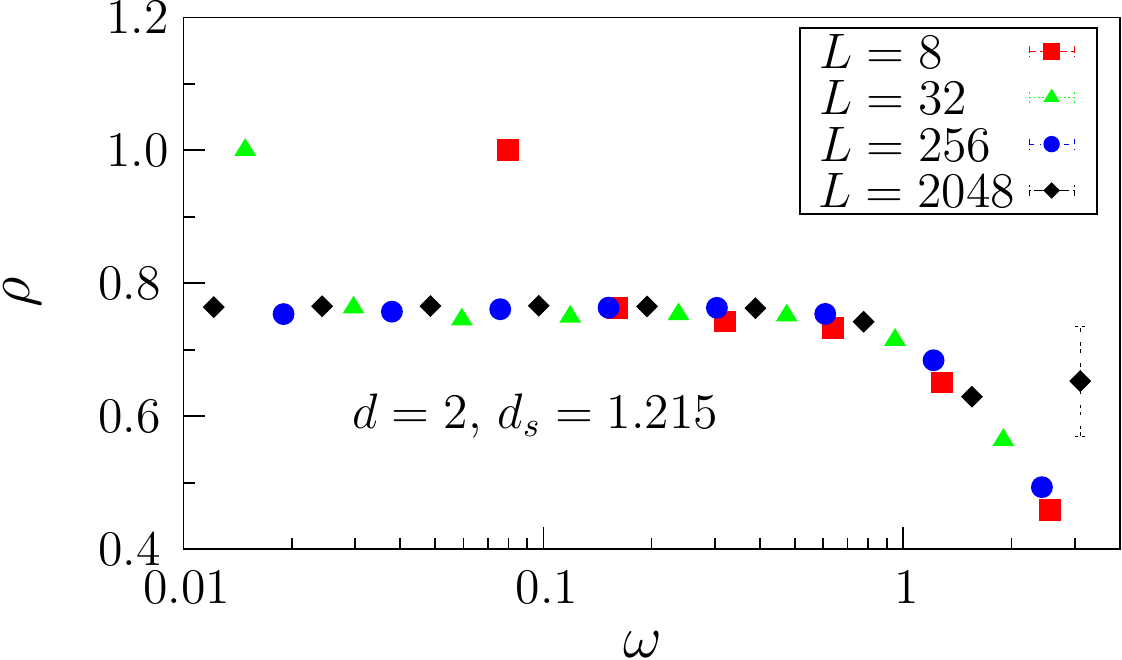}
	\caption{\label{fig:scaling-collapse} Scaling collapse for $d=2$
	with $d_{s}=1.215$, which appears to be moderately close to the 
	true value of $d_{s}$ judging by eye from the goodness 
	of the collapse.  Note that error bars are smaller than the symbol
	size for all points except the right-most point.
	}
\end{figure}
	
	Using this scaling hypothesis, we can make estimates for 
	$d_{s}$ by plotting data for multiple system sizes on the scaled
	axes of $\rho$ vs.~$\omega$.  If we tune the parameter $d_{s}$,
	we see that the curves for various sizes $L_{i}$ can be made to collapse well 
	near an estimated best value of $d_{s}$ (see Fig.~\ref{fig:scaling-collapse}).
	While this method allows us to get a fair idea of this exponent 
	$d_{s}$, it relies on subjective estimations of curve collapse
	and for that reason is not ideal when attempting precise
	estimates.
\begin{figure*}
	\includegraphics[%
		width=\doublefigwidth, 
		keepaspectratio]{./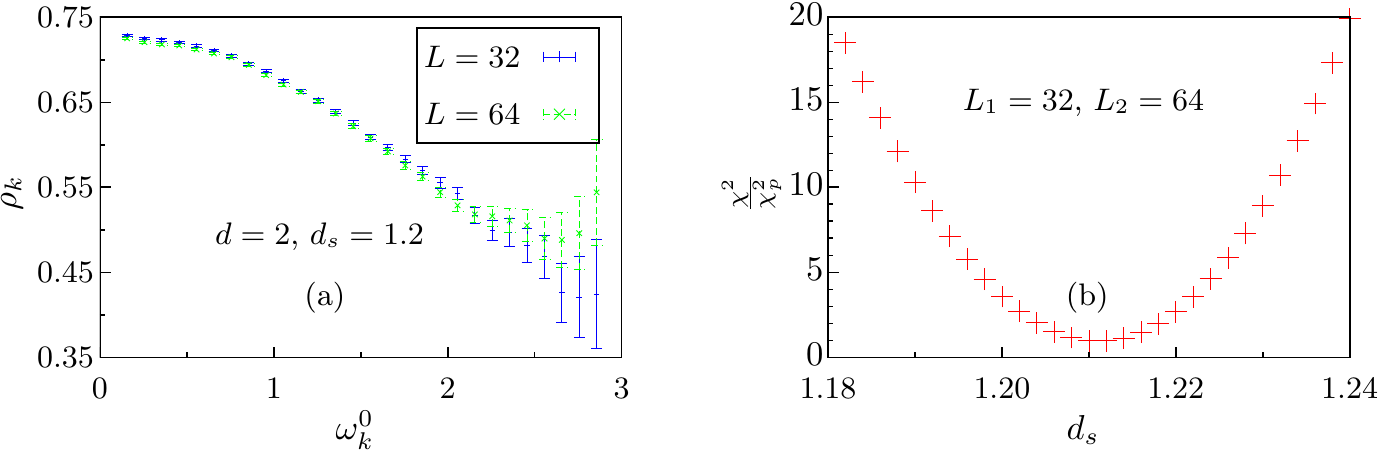}
	\caption{\label{fig:single-pair} A sample collapse for two systems of size $L=32$ and
		$L=64$ in dimension $d=2$. (a)
		Shows the comparison of the two systems at interpolated points 
		$\omega_{k}^0$ for a value of $d_{s}=1.2$. Comparing the value of 
		$\rho_{k}(L_{1}=32)$ and $\rho_{k}(L_{2}=64)$ at each of these points
		allows for the calculation of ${\chi}^{2}(d_{s};L_{1},L_{2})$.
		(b) Shows ${\chi}^{2}$ compared with an expected estimate ${\chi}_{p}^{2}$ 
		as a function of the fitting parameter $d_{s}$ for the same pair of systems.
		Though we determine final error bars by resampling, the ${\chi}^{2}$ model is 
		shown for comparison.
	}
\end{figure*}

	To have a better estimate for $d_{s}$ and to estimate the uncertainty in 
	this estimate, we have implemented an automated fitting procedure.
	This procedure determines an effective exponent $d_{s}(L)$ derived from data 
	for samples of size $L$ and of larger size $2L$ by minimizing a ``goodness of fit'' 
	parameter ${\chi}^{2}$ at each scale $L$.  The only input to this procedure is an
	estimate of $s_{l}$, the small path data cutoff, which is discussed in Appendix
	C.  We then extrapolate $d_{s}(L)$ for $L \rightarrow \infty$ to get our best estimate 
	for $d_{s}$.

	The key part of this calculation is to choose a robust and reliable
	measure for ${\chi}^{2}$.  This allows us to quantitatively 
	measure how well the data for a given pair of system sizes collapses as a function 
	of the parameter $d_{s}$. We seek a value of $d_{s}$ for which 
	this ${\chi}^{2}$ is minimized (Fig.~\ref{fig:single-pair}), 
	though the magnitude of our final error bars are determined by resampling.
	Our definition of ${\chi}^{2}$ must allow for 
	non-uniform correlations in fluctuations of $r(s)$ between 
	different values of $s$, discrete lattice effects 
	(small $s$ lower data cutoff), and statistical uncertainties 
	(large $s$ upper data cutoff).

\subsection{Correlations in $r(s)$}

	In order to define a useful ${\chi}^{2}$ statistic, we first focus on
	the correlations we observed in the $r(s)$ data
	for the spanning trees.  In summary, we find that these correlations
	have a range in $s$ that grows linearly with increasing path length $s$, 
	in dimensions $d=2,3,4,5$.  We then modify the standard ${\chi}^{2}$ test 
	for uncorrelated data to account for these correlations.

	To describe correlations in the averaged $r(s)$ data, it will be helpful to
	first define how our data is averaged over samples, since we use multiple 
	groupings of data for calculating averages.  Let $r_{i}(s)$ 
	denote an average of Euclidean distance $r=\left|\vec{r}\right|$ over all points on tree $i$ that 
	are at a chemical (path length) distance $s$ from the origin for each tree $i=1,\dots,N$ in the 
	$N$ samples.  For faster analysis, the $N$ samples are organized into $N_b$ batches.
	The batch index $\alpha$ ranges over $1\leq \alpha \leq {N_b}$.
	We will use $r_{\alpha}(s)$ to denote an average over the $m = N/{N_b}$ samples in
	batch $\alpha$:
	\begin{equation}
		\label{eqn:batch-avg-definition}
		r_{\alpha}(s) = \frac{1}{m}\sum\limits_{i \in \alpha} r_{i}(s) \ .
	\end{equation}
	The global average over all $N$ samples will be represented by 
	\begin{equation}
		\label{eqn:global-avg-definition}
		\overline{r(s)} = \frac{1}{N_{b}}\sum\limits_{\alpha = 1}^{N_b} r_{\alpha}(s) = \frac{1}{N}\sum\limits_{i=1}^{N} r_{i}(s) \ .
	\end{equation}
	
	To initially examine correlations over $s$ of $r_{i}(s)$ within a single tree, 
	we plot the fluctuations of the batch averages $r_{\alpha}(s)$ about the global average 
	$\overline{r(s)}$.  That is, we plot the variations of the average  ${\delta r}_{\alpha}(s)$, where
	\begin{equation}
		{\delta r}_{\alpha}(s) = r_{\alpha}(s) - \overline{r(s)} \ ,
	\end{equation} 
	vs.~path length $s$.  Fig.~\ref{fig:typical-batch-correlations} 
	displays these correlations for a typical batch of data in a system of size $64^2$.

	Note that the form of the correlations should be independent of batch size, 
	up to a multiplicative scaling factor.  We can see this explicitly by examining 
	$\overline{ {\delta r}_{\alpha}(s) {\delta r}_{\alpha}(t) }$ for path length values
	$s$ and $t$.	
	For $i \neq j$, $i$ and $j$ are independent samples, so we can use the relation
	\begin{equation}
		\overline{{\delta r}_{i}(s){\delta r}_{j}(t)} = \overline{{\delta r}_{i}(s)}\hspace{5 pt}\overline{{\delta r}_{j}(t)}
	\end{equation}
	and of course 
	\begin{equation}
		\overline{{\delta r}_{i}(s)} = \overline{r_{i}(s) - \overline{r(s)}} = 0 \ .
	\end{equation}
	By standard computation all of the $i \neq j$ cross terms are zero, and we are left with 
	\begin{equation}
		\label{eqn:batch-independent-correlations}
		\overline{ {\delta r}_{\alpha}(s) {\delta r}_{\alpha}(t) } = \frac{1}{m}\overline{ {\delta r}_{i}(s) {\delta r}_{i}(t) } \ ,
	\end{equation}
	showing that the choice of grouping the data into batches should not affect the form of the correlations
	in $r(s)$.

\begin{figure}
	\includegraphics[%
		width=\singfigwidth, 
		keepaspectratio]{./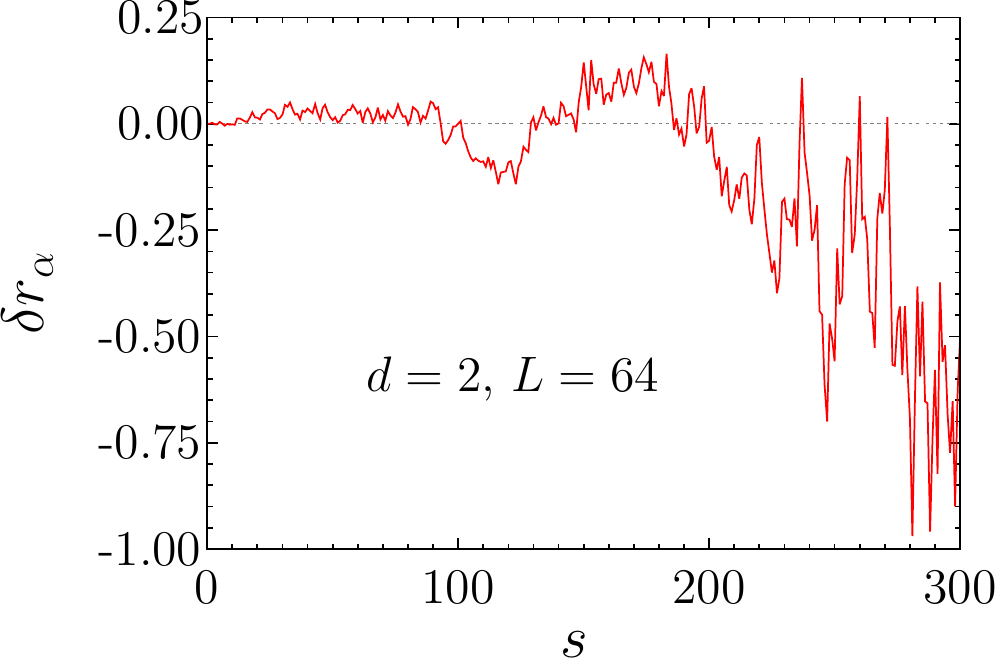} 
	\caption{\label{fig:typical-batch-correlations} Variations from the mean
		for a typical batch $\alpha$ of data for a system of size $L=64$
		in dimension $d=2$.  The difference between the mean Euclidean distances
		${\delta r}_{\alpha}(s) = r_{\alpha}(s) - \overline{r(s)}$ is plotted
		vs.~path length $s$.
	}
\end{figure}

	To quantitatively examine these correlations we compute the 
	correlation matrix $c_{s,t}$ defined as
	\begin{equation}
		\label{eqn:cij-definition}
		c_{s,t} = \frac{1}{N_b} \sum\limits_{\alpha = 1}^{N_b} { \frac{ {\delta r}_{\alpha}(s){\delta r}_{\alpha}(t) }{ {\gamma}(s) {\gamma}(t) } } \ .
	\end{equation}
	Here ${\gamma}(s)$ is the root mean square fluctuation 
	in ${\delta r}_{\alpha}(s)$ computed over the $N_b$ batches of data and is 
	used to normalize the entries $c_{s,t}$ of the correlation matrix:
	\begin{equation}
		\label{eqn:gamma-definition}
		{\gamma}^{2}(s) = \frac{1}{N_{b}} \sum\limits_{\alpha = 1}^{N_b} {\delta r}^{2}_{\alpha}(s) \ .
	\end{equation}

	A sample correlation matrix for $L=64$ is plotted in Fig.~\ref{fig:correlation-matrix}.
	The data suggests that the correlation length 
	increases with increasing $s$.  
	To construct a model for the scaled correlation length, we measured
	the width along the diagonal of the peak in the correlation matrix for 
	various values of $s$.  We used different measures 
	for measuring this width, including a measure of the 
	full width at half maximum (how many ``steps'' away from the diagonal
	before $c_{s,t}$ falls to a value of ${1}/{2}$), a measure of
	one decay length (how many steps from the diagonal until $c_{s,t}$
	drops to a value of ${1}/{e}$), and a measure using an exponential 
	decay model $c_{s,t} = \exp\left({-{\left|s-t\right|}/{\ell^\prime}}\right)$ 
	assuming an unscaled correlation length $\ell^\prime$.  
	We calculate $\ell^\prime$ using the zeroth moment of $c_{s,t}$, allowing
	us to then obtain the scaled correlation length
	$\ell = {\ell^\prime}/{L^{d_{s}}}$, given a rough estimate for $d_{s}$.
\begin{figure}
	\includegraphics[%
		width=\singfigwidth, 
		keepaspectratio]{./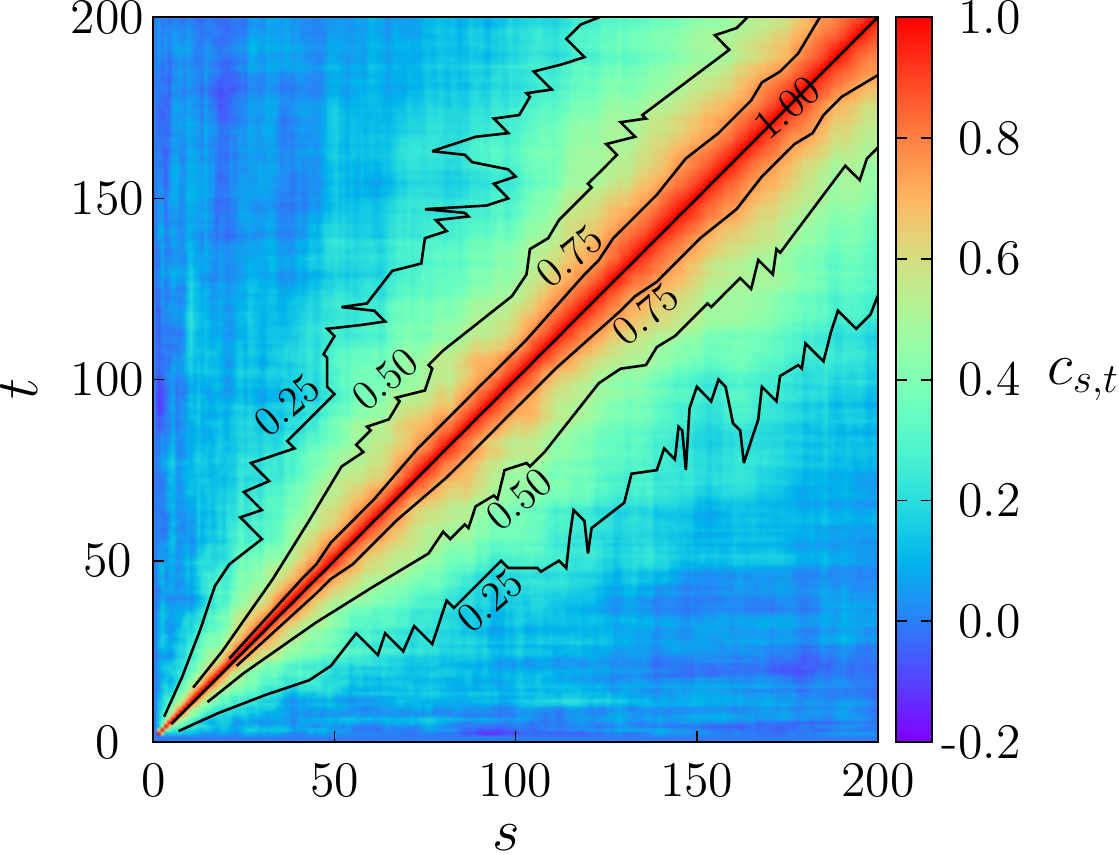}
	\caption{\label{fig:correlation-matrix} Correlation matrix
		$c_{s,t}$ averaged over all data for systems of size 
		$L=64$ in dimension $d=2$.  Selected contours are shown.
	}
\end{figure}

	Fig.~\ref{fig:correlation-length-collapse} shows the scaled correlation 
	length $\ell$ vs.~$\omega$ for multiple system sizes for dimension $d=2$,
	using the exponential decay model $c_{s,t} = \exp\left({-{\left|s-t\right|}/{\ell^\prime}}\right)$ 
	for the unscaled correlation length $\ell^\prime$.
	We see that $\ell$ increases linearly with $\omega$ 
	up until roughly $\omega = 1$, at which point the scaled correlation length 
	levels off to a constant value.  Although the proportionality
	constant differs depending on which method is used to measure the correlation
	length, we observed a linear relationship between $\ell$ and $\omega$ in all cases.
\begin{figure}
	\includegraphics[%
		width=\singfigwidth, 
		keepaspectratio]{./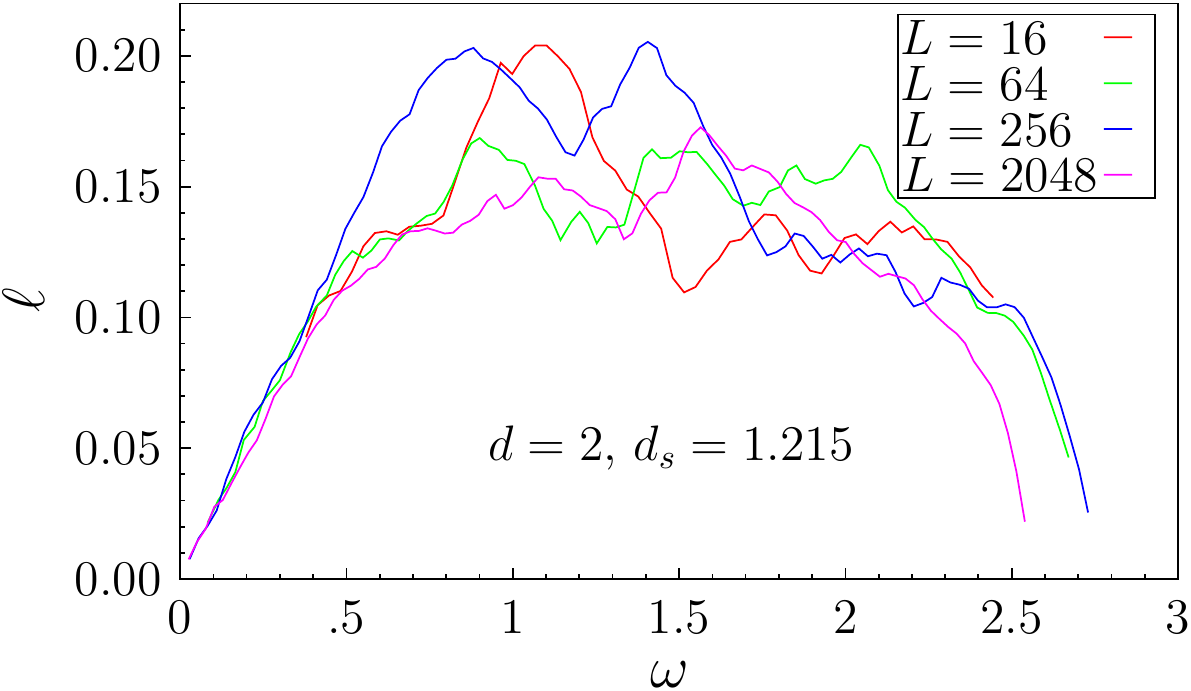} 
	\caption{\label{fig:correlation-length-collapse} Scaled correlation length
		$\ell$ vs.~$\omega$ for various systems in dimension $d=2$ with
		$d_{s}=1.215$.  In this case $\ell$ was measured using the exponential
		decay model $c_{s,t} = \exp\left({-{\left|s-t\right|}/{\ell^\prime}}\right)$ for the unscaled
		correlation length $\ell^\prime$, with the relation to the scaled correlation
		length $\ell$ being $\ell = {\ell^\prime}/{L^{d_{s}}}$.
	}
\end{figure}
	We also find this linearity in dimensions $d=2,3,4,5$.
	Based on this empirical observation, we choose an ansatz for the scaled correlation length.
	\begin{equation}
		\label{eqn:correlation-length-model}
		\ell(\omega) \propto \left\{
			\begin{array}{lr}
				\omega & : \omega \leq 1 \ \ \\
				1 & : \omega > 1 \ .
			\end{array}
		\right.
	\end{equation}
	For the exponential decay model used in Fig.~\ref{fig:correlation-length-collapse},
	the proportionality is ${\ell} \approx 0.24 \omega$.  Physically, this means that on average, for any given growing path 
	in a spanning tree, the path must grow about $24\%$ longer than its current length before
	the correlations in the $r(s)$ data for this path decay, using the exponential decay model
	$c_{s,t} = \exp\left({-{\left|s-t\right|}/{\ell^\prime}}\right)$ for the unscaled
	correlation length $\ell^\prime$.

	Now that we have a consistent model for the correlation length, we incorporate this
	model into a ${\chi}^{2}$ goodness of fit measure.  
	We use subscripts $1$ and $2$ to indicate data sets for systems of size $L_1$ and $L_2$ being
	compared.  We use $L_2 = 2 L_1$.
	A general	goodness of fit measure for quantifying how well two curves collapse 
	in variables $\rho$ vs.~$\omega$ starts with choosing a set of independent points 
	$\omega_{k}$ and at each point calculating the difference between $\rho_{1}(\omega_{k})$ 
	and $\rho_{2}(\omega_{k})$, $ \Delta_{12}(\omega_{k}) = \rho_{1}(\omega_{k}) - \rho_{2}(\omega_{k}) $.
	
	For a ${\chi}^{2}$ test with uncorrelated data this difference $\Delta_{12}(\omega_{k}^0)$ at the point ${\omega}_{k}^0$
	would then be squared and normalized by the sum of the variances, ${\sigma}_{1}^{2}(\omega_{k}^0)$ 
	for the system of size $L_{1}$ and ${\sigma}_{2}^{2}(\omega_{k}^0)$ for the system of size $L_{2}$.
	This gives 
	\begin{equation}
		\label{eqn:standard-chisquared-definition}
		{\chi}^{2}_{0}(d_{s};L_{1},L_{2}) = \sum\limits_{k} \frac{ {\Delta}_{12}^{2}(\omega_{k}^0) }{ {\sigma}_{1}^{2}(\omega_{k}^0) + {\sigma}_{2}^{2}(\omega_{k}^0) } \ ,
	\end{equation} 
	where we would sum over discrete points $\omega_{k}^0$ chosen with uniform spacing~\cite{Numerical_Recipes}.

	To incorporate the correlations we've observed
	into our definition of ${\chi}^{2}$, we compare the spacing
	of points chosen for the sum with the correlation length.  Specifically, we assume
	that if we choose points $\omega_{k}$ logarithmically spaced, i.e., $\omega_{k+1} = q \omega_{k}$,
	we will effectively have a constant scaled correlation length, for $\omega_{k} \leq 1$.  And for some
	choice of $q$ this scaled correlation length will be equal to $1$, allowing us
	to use the form of Eq.~(\ref{eqn:standard-chisquared-definition}):
	\begin{equation}
		{\chi}^{2} = \sum\limits_{k} \frac{ {\Delta}_{12}^{2}(\omega_{k}) }{ {\sigma}_{1}^{2}(\omega_{k}) + {\sigma}_{2}^{2}(\omega_{k}) } \ .
	\end{equation} 
	If we consider taking the continuum limit of this equation, we see that
	\begin{equation}
		\begin{split}
			{\chi}^{2} &= \int \frac{ {\Delta}_{12}^{2} }{ {\sigma}_{1}^{2} + {\sigma}_{2}^{2} }\,d(\log_q \omega) \\
								 &= \int \frac{ {\Delta}_{12}^{2} }{ {\sigma}_{1}^{2} + {\sigma}_{2}^{2} }\left(\frac{1}{\ln(q)\omega}\right)\,d{\omega} \\
								 &\propto \int \frac{ {\Delta}_{12}^{2} }{ {\ell}_{q}({\sigma}_{1}^{2} + {\sigma}_{2}^{2}) }\,d{\omega} \ ,
		\end{split}
	\end{equation}
	where we have defined ${\ell}_{q} = \ln(q)\omega$.  If we revert back to the discrete
	form, we see that our generalized goodness of fit measure becomes
	\begin{equation}
		{\chi}^{2} = \sum\limits_{k} \frac{ {\Delta}_{12}^{2}(\omega_{k}^0) }{ \ell(\omega_{k}^0) [{\sigma}_{1}^{2}(\omega_{k}^0) + {\sigma}_{2}^{2}(\omega_{k}^0)] } \ ,
		\label{eqn:custom-chisquared-definition}
	\end{equation}
	where $\ell(\omega_{k}^0) \propto {\ell}_{q}$ is the scaled correlation length at point
	$\omega_{k}^0$ \cite{Box-Pierce}. 
	Dividing by the scaled correlation length will effectively weight each scaled correlation 
	length sized ``box'' of data points in the ${\chi}^{2}$ sum as one independent 
	data point.  
	
	For our analysis, all runs ($N=4 \times 10^6$ samples per system size)
	were split into $N_b=400$ uniform batches (histograms) of $N/N_b = 10^4$ samples apiece.
	This was done in part because storing data for all $4 \times 10^6$ samples was
	impractical and running the bootstrap analysis over samples
	individually would be too time consuming to be feasible.
	To determine a useful batch size, we ran some preliminary 
	tests by varying the batch size and plotting scaling collapses like 
	Fig.~\ref{fig:scaling-collapse}.  We saw similar estimated values of
	$d_{s}$ based on these collapses, independent of batch size.  A batch
	size of 400 seemed balanced because it allowed for error bars of order 
	${1}/{\sqrt{N_b}} \approx 5\%$.  As we have shown in Eq.~(\ref{eqn:batch-independent-correlations}), 
	the form of correlations should be independent of batch size chosen.

\subsection{Extrapolation to $L=\infty$}

	Next we must consider exactly what region in $s$ of the data we want to 
	fit.  We address lattice effects by examining a lower (small $s$) data cutoff.
	The upper (large $s$) cutoff is also considered due to low statistics (large uncertainties)
	for $s \gg L^{d_{s}}$, though in the end we find that no upper cutoff is necessary.
	To determine reasonable cutoffs, we examine how the measured ${\chi}^{2}$ 
	and $d_{s}$ respond to changes in these cutoffs.  A more detailed discussion 
	is included in Appendix C.  Once we've decided upon fair cutoffs, 
	this collapse procedure is run $N_b$ times (once for each batch of data for each pair of
	sizes $L_1$, $L_2$ with $L_2 = 2 L_1$). Each time, 
	the value of $d_{s}$ for which ${\chi}^2$ is minimized is found, giving $N_b$ 
	independent estimates ${d_{s}}(\alpha,L_1,L_2)$ for a given pair of systems
	$L_1$ and $L_2$.  The final estimate of $d_{s}$ for this pair of systems 
	is the average of the $N_b$ independent estimates:
	\begin{equation}
		\overline{d_{s}(L_1,L_2)} = \frac{1}{N_b} \sum\limits_{\alpha = 1}^{N_b} {d_{s}}(\alpha,L_1,L_2) \ .
	\end{equation}
	Then the sample variance ${S}^{2}(L_1,L_2)$ in these $N_b$ estimates is used to estimate 
	the statistical error bars in the final estimate,
	\begin{equation}
		{\sigma}_{d_{s}}(L_1,L_2) = \frac{{S}(L_1,L_2)}{\sqrt{N_b - 1}} \ .
	\end{equation}
	At this point we have obtained a single estimate of $d_{s}$ 
	(with error bars) for each consecutive pair of systems simulated. 
	
	Next we look to see if this $\overline{d_{s}(L_1,L_2)}$ converges to some 
	value as $L_1$ and $L_2$ tend to infinity.  We extrapolate 
	the infinite system size limit using a variety of least squares fitting 
	routines adapted from the GNU Scientific Library \cite{GSL}.  
	Standard scaling suggests that $d_{s}$ as a function of system size
	$L$ will exhibit power law scaling behavior.  But since we have no knowledge
	of the expected value for the exponent in this power law, we can
	allow this exponent to vary as a parameter in our fit, plotting
	$\overline{d_{s}}$ vs.~$L^{-\lambda}$, where we take $L$ to be the geometric
	mean $L = \sqrt{L_1 L_2}$.
	Allowing $\lambda$ to vary, fitting the data gives an estimate
	for $d_s$ as well as the correction to scaling exponent $\lambda$.  
	See Fig.~\ref{fig:linear-fit-extrapolation} for an example of
	one extrapolation attempt, where $\lambda$ is seen to be roughly $0.5$ and
	linear least squares fitting for $\overline{d_{s}}$ vs.~$L^{-\lambda}$ is used
	for the four largest system sizes in $d=2$ dimensions.

\begin{figure}
	\includegraphics[%
		width=\singfigwidth, 
		keepaspectratio]{./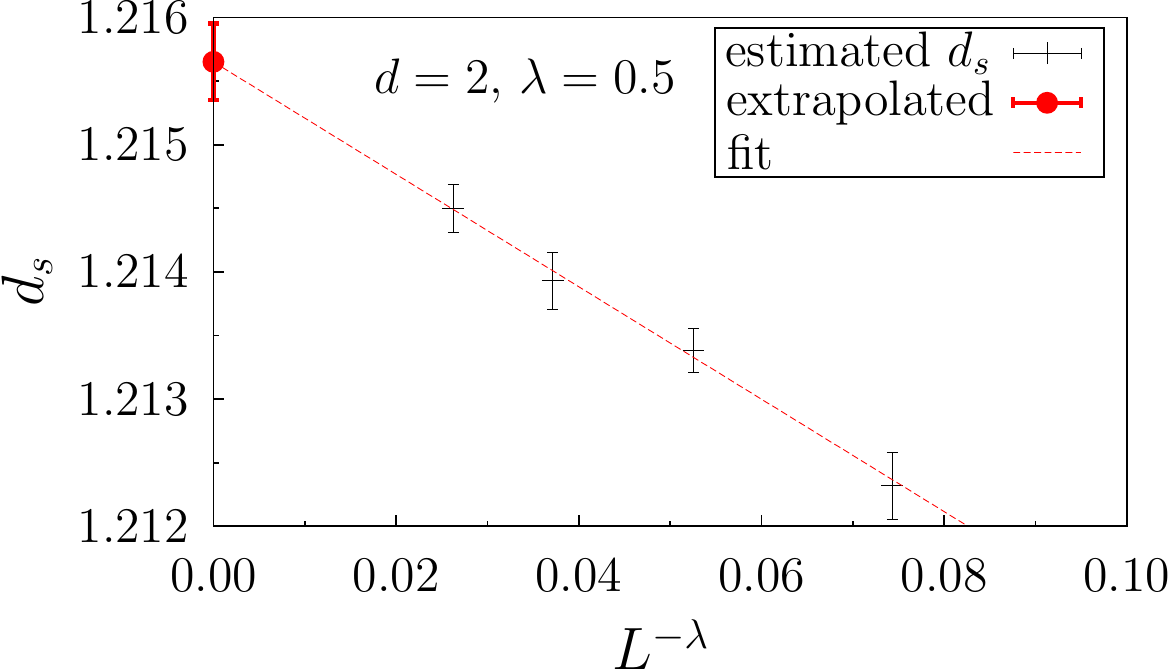} 
	\caption{\label{fig:linear-fit-extrapolation} An illustration
		of a linear least squares fitting method being used to extract
		the value of $d_{s}$ in the infinite system size limit in 
		dimension $d=2$. The fit uses the form $d_{s}(L) = A L^{-\lambda} + d_{s}(\infty)$ 
		where $\lambda$ is a correction to scaling exponent and $A$ and $d_{s}(\infty)$ are fitting 
		parameters.  Here $\lambda = 0.5$, and $L = \sqrt{L_1 L_2}$, where 
		$L_{1}$ and $L_{2}$ are the two system sizes used to produce a 
		given data point.  The fit found gives $d_{s}(\infty) = 1.216(1)$.
	}
\end{figure}

\subsection{Blind test for analysis method}

	To test this data analysis procedure, we applied the procedure
	to the similar problem of the uniform spanning tree (UST) in dimension
	$d=2$ \cite{Read}.
	Whereas a minimal spanning tree seeks to minimize the total
	weight of a tree that spans all the vertices of the lattice, a
	uniform spanning tree is simply any tree that spans the vertices 
	of the lattice, chosen with equal weight from all possible spanning
	trees.  It can be thought of as a generalization of the
	minimal spanning tree problem to a system where all edges have
	the exact same weight. The reason for using this system as a
	test case for the analysis method is that one can examine $r(s)$ 
	data to look at $d_{s}$ for paths on the UST, just as one
	would examine such paths on an MST. The analysis should be
	completely analogous, and we observed directly that the 
	correlations in the $r(s)$ data for the UST have a 
	similar structure to that of the MST, allowing us to use 
	the form of Eq.~(\ref{eqn:correlation-length-model}) for $\ell$.
	To ensure no bias in this test, one of us was not made aware of the
	nature of the system at the time of this test but was only 
	given the prepared $r(s)$ data on which to blindly run the 
	analysis. The final result from this blind test, using a range of
	systems of size $128^2$ to $1024^2$, was
	$d_{s} = 1.2499(4)$ for the UST, which agrees well with 
	the known exact result $d_{s} = {5}/{4}$ in two dimensions \cite{Majumdar}.
	This provides confidence in the data analysis method.

\section{Results}

	Table \ref{table:ds} displays our final numerical estimates for
	$d_{s}$. The values in dimensions $d=2$ and 
	$d=3$ agree with previous results
	\cite{Cieplak-Maritan-Banavar,Buldyrev-et-al,Sheppard-et-al, Wieland-Wilson}
	and have error bars that are similar or smaller.  Our result for $d_{s}$ in 
	dimension $d=4$ is a bit higher than the result 1.59(2) by Cieplak, Maritan, 
	and Banavar \cite{Cieplak-Maritan-Banavar}.
\setlength{\tabcolsep}{0.3cm}
\begin{table}%
	\caption{$d_{s}$ calculated using MST algorithms} 
	\centering 
	\begin{tabular}{c c c} 
		\hline\hline                        
		Dimension $d$ & Two-step & Prim's (no trimming) \\ [0.5ex] 
		\hline                  
		2 & 1.216(1) & 1.216(1) \\
		3 & 1.46(1) & 1.46(1)  \\
		4 & 1.65(2) & 1.66(2) \\
		5 & 1.86(4) & 1.86(4) \\ [1ex]      
		\hline\hline 
		\end{tabular}
		\label{table:ds} 
\end{table}
	The fractal dimensions computed from the two-step MTISC 
	agree with those of the intermediate, untrimmed Prim's MTISC, 
	to within our error estimates.

	Our results for the path length dimension $d_{s}$ can be directly compared
	with the $\mathcal{O}(\epsilon = 6-d)$ expansion of Jackson and Read.  A
	graphical comparison of the data is shown in 
	Fig.~\ref{fig:results-vs-predictions}.  Our results are not in
	conflict with the first order perturbation theory calculations.
	Some previous comparisons of $\mathcal{O}(\epsilon)$ calculations
	with numerical results for disordered materials show differences of similar magnitude
	\cite{Dahmen-Sethna}.

	We investigate possible $\mathcal{O}({\epsilon}^2)$ calculations using 
	nonlinear fitting routines adapted from the GNU
	Scientific Library \cite{GSL} to fit $d_{s}$ vs.~$\epsilon$.
	Using a two parameter fit,
	\begin{equation} 
		\label{eqn:2nd-order-two-parameter-fit}
		d_{s} = 2 + a\epsilon + b{\epsilon}^2 \ ,
	\end{equation}
	and allowing $a$,$b$ to vary, we find a chi-squared of $3.8$ for two d.o.f.,
	suggesting consistency with a quadratic fit to within our errors.
	For this fit, we find $a = -0.142(8)$, which is near to the $-1/7$ suggested by
	Jackson and Read.  We find for this fit the value $b = -0.014(2)$ for the second order
	prefactor.

	Fixing $a=-1/7$, a one parameter fit,
	\begin{equation}
		\label{eqn:2nd-order-one-parameter-fit}
		d_{s} = 2 - \frac{\epsilon}{7} + b{\epsilon}^2 \ ,
	\end{equation}
	gives $b=-0.0133(1)$ with a chi-squared of $3.8$ for three d.o.f.  
	Presuming that the Jackson and Read result is correct to first order, 
	this gives us a more precise numerical prediction for the second order term.  
	This fit, Eq.~(\ref{eqn:2nd-order-one-parameter-fit}), is plotted in Fig.~\ref{fig:results-vs-predictions}.

\begin{figure}
	\includegraphics[%
		width=\singfigwidth, 
		keepaspectratio]{./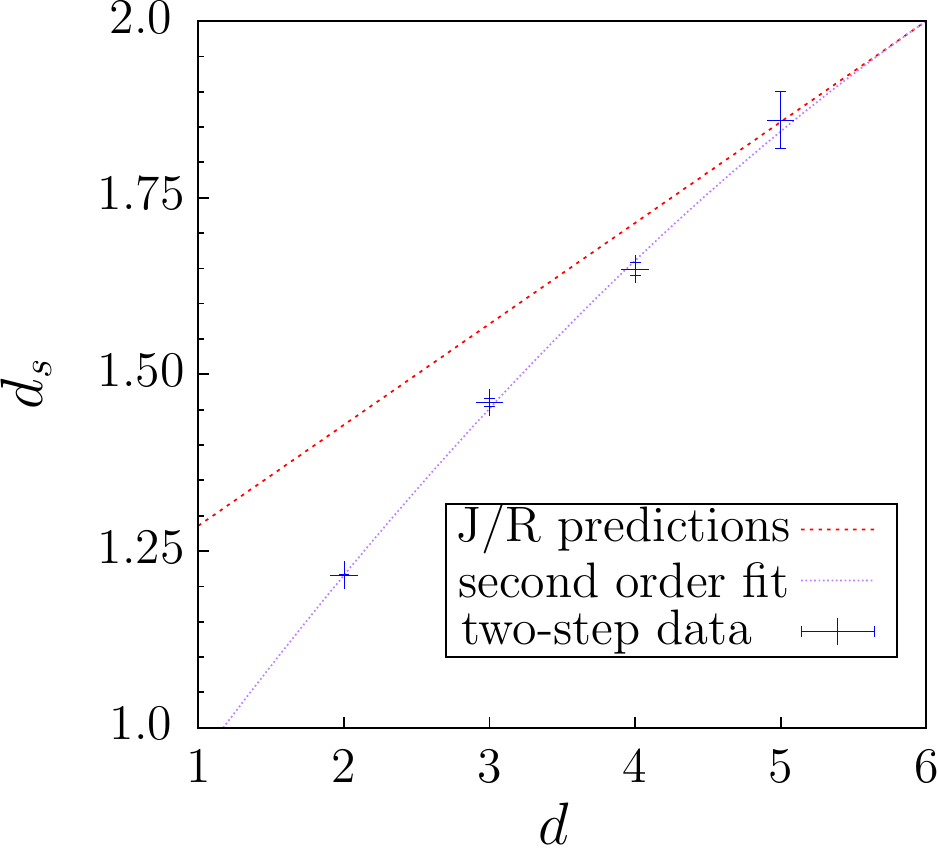} 
	\caption{\label{fig:results-vs-predictions} A plot comparing 
		numerical results for $d_{s}$ using the two-step method
		against predictions from the $\mathcal{O}(\epsilon)$ perturbation expansion 
		theorized by Jackson and Read.  Also included is an example
		of a compatible $\mathcal{O}({\epsilon}^2)$ fit, 
		$d_{s} = 2 - \frac{\epsilon}{7} +b{\epsilon}^{2}$, with a best fit value 
		$b=-0.0133$.
	}
\end{figure}

\section{Summary} 

	The intention of the two-step method was to allow the simulation
	of larger systems than were previously possible, reducing
	memory requirements of MST-finding algorithms by combining
	Prim's algorithm with Kruskal's algorithm.  In this regard
	the work was successful, allowing for precise calculations
	of $d_{s}$.  The trimming of the Prim's algorithm tree
	to possibly improve scaling of MTISCs constructed appears to have been
	unnecessary.  Calculations using 
	Prim's algorithm alone yielded almost identical results to those
	that used the two-step method.

	We developed a data analysis method that allowed taking nonuniform
	correlations into account in order to obtain more accurate
	estimates for $d_{s}$.  This analysis method should be
	applicable to a wide array of disordered systems with scale invariance 
	and may be useful for future work.

	The results for $d_{s}$ calculated in this work are
	compatible with the perturbation expansion result
	proposed by Jackson and Read.  Fitting $d_{s} = 2 + a\epsilon + b{\epsilon}^2$,
	we find $a = -0.142(8)$, compatible with the predicted $a=-1/7$ \cite{Jackson-Read-II}.
	Fixing the first order result to the Jackson and Read result,
	we used an $\mathcal{O}({\epsilon}^2)$ 
	fit, $d_{s} = 2 - \frac{\epsilon}{7} +b{\epsilon}^{2}$, yielding
	$b=-0.0133(1)$.  This could be checked if a higher order analytic 
	calculation could be computed.

	This work was made possible in part by NSF Grant No.~DMR-1006731 and 
	by the Syracuse University Gravitation and Relativity computing cluster, 
	which is supported in part by NSF Grant No.~PHY-0600953.
	This work was carried out largely using the Syracuse University HTC
	Campus Grid, a computing resource of approximately 2000 desktop 
	computers supported by Syracuse University.
 	Some of this work was discussed at the Aspen Center for Physics
	(NSF Grant No.~1066293).

\section{Appendix}

In this appendix, we present more precise definitions of the algorithms
used to generate MTISCs, as well as details of the connection between $T_{P}$, $T_{K}$, 
and $T_{2}$.   We also discuss the region in $s$ used for the ${\chi}^{2}$ fitting 
procedure to estimate the fractal dimension $d_{s}$.

\subsection{Definitions and Algorithms}{\label{appendix:definitions}}
\numberwithin{equation}{subsection} 

\newenvironment{definition}[1][Definition]{\begin{trivlist}
\item[\hskip \labelsep {\bfseries #1}]}{\end{trivlist}}

\newenvironment{algorithm}[2][Algorithm:]{\begin{trivlist}
\item[\hskip \labelsep {\bfseries #2 #1}] \item}{\end{trivlist}}

\newenvironment{proof}[2][Proof]{\begin{trivlist}
\item[\hskip \labelsep {\bfseries #1}] {#2}}{\end{trivlist}}

\newcounter{Lemmacount}
\setcounter{Lemmacount}{0}
\newenvironment{lemma}[1][Lemma \arabic{Lemmacount}]{\stepcounter{Lemmacount} \begin{trivlist}
\item[\hskip \labelsep {\bfseries #1}]}{\end{trivlist}}

\newenvironment{corollary}[1][Corollary \arabic{Lemmacount}]{\stepcounter{Lemmacount} \begin{trivlist}
\item[\hskip \labelsep {\bfseries #1}]}{\end{trivlist}}

\newcounter{Casecount}
\setcounter{Casecount}{0}
\newenvironment{proofcase}[1][Case \arabic{Casecount}]{\stepcounter{Casecount} \begin{trivlist}
\item[\hskip \labelsep {\bfseries #1}]}{\end{trivlist}}

Here we present a more detailed discussion of Kruskal's and
Prim's algorithms.  We consider these MST-finding algorithms in the context
of an undirected weighted graph $G=(V,E,w)$, where $V$ is a set of 
vertices, $E$ is a set of edges connecting these vertices, and 
we define a weight function $w:E \mapsto \mathbb{R}$, with each edge 
$e \in E$ having weight $w(e)$.  We consider the case of unique weights; 
no two edges in $E$ have exactly the same weight.

In order to precisely describe the algorithms, it will be helpful to define
some terms.  A cycle or loop is a closed path such that the removal of any single 
edge from this path will result in the path becoming an open path,
a connected set of edges with no vertex shared by more than two edges. 
When constructing trees, the notion of allowed and forbidden cycles
is useful.  Every cycle in the finite set $\mathcal{C}$, the set of all 
possible cycles for graph $G$, is chosen to belong to $\mathcal{C}_F$ (the set 
of forbidden cycles) or $\mathcal{C}_A$ (the set of allowed cycles), where 
$\mathcal{C}_F \cup \mathcal{C}_A = \mathcal{C}$ and 
$\mathcal{C}_F \cap \mathcal{C}_A = \emptyset$.  In a typical application 
on a periodic lattice, the allowed cycles correspond to loops that wrap 
around the lattice, while forbidden cycles correspond to non-wrapping 
loops that can be deformed, plaquette by plaquette, to a point.  
A set of two or more edges is considered connected if every edge in 
the set shares a vertex with at least one other edge in the set.
A cluster $T$ is set of connected edges and the 
vertices that these edges connect. A cluster may contain 
cycles, whereas a tree is an acyclic cluster. $T_E$ denotes 
the set of edges in the cluster and $T_V$ denotes the set of vertices 
in the cluster.

Consider edge $e=(u,v)$ where $e \in E$ and $u,v \in V$.  Adding $e$ 
to a cluster $T$ means that the edge set for the cluster, $T_E$, becomes 
$T_E \cup \{e\}$, and the set of vertices in the cluster, $T_V$, becomes $T_V \cup \{u,v\}$.
A forbidden edge for a cluster $T$ is an edge that, if added
to $T$, would cause $T$ to gain a forbidden cycle.
An edge that is not a forbidden edge is an allowed edge.
An allowed terminating edge for a cluster $T$ is an allowed edge
that, if added to $T$, would cause $T$ to gain an allowed cycle
(a wrapping loop).  The addition of this allowed cycle will be used as 
the termination condition for both Kruskal's and Prim's algorithms, which 
are described below.  $\partial T$ is the set of adjacent edges for a cluster
$T$, consisting of all edges that have one (but not both) endpoints in cluster $T$.
As this set of edges forms a border or frontier on the outer regions of the cluster,
we call $\partial T$ the frontier of cluster $T$.
The forbidden frontier of this cluster, $\partial_{F} T$, is 
the subset of the adjacent edges that are forbidden edges for $T$.
The allowed frontier of this cluster, $\partial_{A} T$, 
is the subset of the adjacent edges that are allowed edges for $T$.

To examine an edge in Kruskal's or Prim's algorithm is 
to select this edge during a step of the algorithm and 
decide whether to accept or reject this edge into a cluster based 
on the conditions of the algorithm (usually dealing with the weight 
of the edge and whether this edge is allowed or forbidden for a 
particular cluster).  

Next we will present Kruskal's and Prim's 
algorithms, step by step, using the notation we have outlined above.

\begin{algorithm}{Kruskal's}
\begin{enumerate}
\item Sort $E$ by increasing weight $w$ to form a list of edges $L$.
\item Initialize each vertex in the graph to be its own tree, 
			not connected to any other vertices and containing no edges.
\item Select the first (lowest weight) edge $e=(u,v) \in L$.
			Remove this edge from $L$.  If $e$ is a allowed edge, join 
			into a single tree the tree containing $u$ with the tree 
			containing $v$, adding edge $e$.  If $e$ is a forbidden edge 
			(that is, adding edge $e$ to to the tree(s) containing its 
			endpoints $u$ and $v$ would introduce a forbidden cycle), 
			edge $e$ is disregarded and not added to any trees.  
			Edge $e$ has now been examined.
\item Repeat step (3) until a allowed terminating edge, i.e., one that introduces
			an allowed (wrapping) cycle, is selected.  The 
			tree that contains both the vertices of this edge
			is identified as the Kruskal's MTISC $T_{K}$, 
			and the allowed terminating edge that is examined in this final
			step is the Kruskal's wrapping edge, $e_{K} \in E$.
\end{enumerate}
\end{algorithm}

\begin{algorithm}{Prim's}
\begin{enumerate}
\item Initialize the growing Prim's tree $T_{g}$ to have one site, 
			called the Prim's origin, $v_0 \in V$.  Note that $T_{g}$
			is both a function of the $v_0$ and time (number of iterations of 
			Prim's algorithm), since $T_{g}$ ``grows'' from the origin $v_0$
			by adding edges and vertices as Prim's algorithm proceeds.
\item Sort the frontier of the current Prim's tree, $\partial T_{g}$, by increasing 
			edge weight $w$ to form the Prim's queue (a function of $T_{g}$),
			excluding any edges that have already been examined.
\item Select the first (lowest weight) edge $e=(u,v)$ in 
			the Prim's queue.  
			If $e$ is a allowed edge, add it to $T_{g}$, and if either $u$ or $v$ 
			is not already in $T_{g}$, add it to $T_{g}$ as well.
			Otherwise, if $e$ is a forbidden edge, disregard the edge and do not add it to $T_{g}$.
  		Edge $e$ has now been examined.
\item Repeat steps (2)-(3) until a allowed terminating edge, i.e., one that 
			introduces an allowed (wrapping) cycle, is selected.  
			At this time, the growing Prim's tree $T_{g}$ is identified
			as the Prim's MTISC $T_{P}$, and the allowed terminating edge examined in this
			final step is the Prim's wrapping edge, $e_{P}(v_{0}) \in E$.  
			Note that $T_{P}$ and $e_{P}$ are both functions of 
			the Prim's origin $v_0$.
\end{enumerate}
\end{algorithm}

\subsection{Proof of validity for two-step method}{\label{appendix:proof}}

The two-step method first applies Prim's algorithm
to find $T_{P} \cup e_{P}$.  Kruskal's algorithm is then 
executed using the edges in $T_{P} \cup e_{P}$, serving as a trimming
procedure for $T_{P}$ and yielding what we 
will term the two-step MTISC, $T_{2}$.  Here we will prove that the 
two-step method for MTISC generation yields $T_{2} = T_{K}$ in 
all cases except in the case of multiple disjoint ISCs.  All results 
of this method can be divided into three cases.  The first is the 
case where $T_{P}$ is exactly equal to $T_{K}$ and no trimming is needed, 
yielding $T_{P} = T_{2} = T_{K}$.  In the second case, $T_{K}$ is a subset 
of $T_{P}$, and the trimming procedure of the two-step method yields $T_{2} = T_{K}$.  
In the third case, there are multiple disjoint ISCs, and $T_{P}$ and $T_{K}$ share no 
common edges or vertices, meaning that $T_{2} \ne T_{K}$.  In other words, we will 
prove that 
\begin{equation}
	(T_{P} \supseteq T_{K}) \vee (T_{P} \cap T_{K} = \emptyset)
\end{equation}
and subsequently that $T_{2} = T_{K}$ in all cases except when 
the Prim's MTISC and the Kruskal's MTISC do not intersect.  Here it will be useful 
to establish a few lemmas to aid in verifying these three cases we have outlined.

\begin{lemma}
Edges in the allowed frontier of $T_{K}$ have higher weight than edges
in $T_{K}$.
\end{lemma}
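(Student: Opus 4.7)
The plan is to argue by contradiction, exploiting the two defining features of Kruskal's algorithm as stated in the appendix: edges are examined in strictly increasing order of weight, and clusters only grow by merging (they never split). Fix an edge $e=(u,v)\in \partial_{A} T_{K}$; by the definition of the frontier, exactly one endpoint, say $u$, lies in $T_{K}$, while $v\notin T_{K}$. Let $e_{K}$ denote the Kruskal wrapping edge examined in the final step. The heart of the proof would be to show that $e$ cannot have been examined by Kruskal's algorithm at any point prior to termination.

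The key observation is that if $e$ had been examined, one of two things must have happened. Either (i) $u$ and $v$ lay in different trees at that moment, in which case $e$ creates no cycle and would be accepted, becoming an internal edge of the merged tree; or (ii) $u$ and $v$ already lay in the same tree, in which case the only way $e$ can be discarded is if it forms a forbidden cycle. Outcome (i) would put $e$ into the final tree containing $u$, namely $T_{K}$, contradicting $e\in\partial T_{K}$ (adjacent, hence not internal). Outcome (ii) would place $v$ permanently into the same component as $u$ because Kruskal's clusters never split, so $v\in T_{K}$, again a contradiction. Therefore $e$ was never examined, and since Kruskal's inspects every edge of weight at most $w(e_{K})$ before halting and all weights are distinct, this forces $w(e)>w(e_{K})$. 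Every $e'\in T_{K}$, on the other hand, was accepted by Kruskal's at or before termination, so $w(e')\leq w(e_{K})$. Combining these two inequalities yields the claimed $w(e)>w(e')$.

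The only delicate point I anticipate is the bookkeeping around the terminating edge $e_{K}$ itself, specifically whether it is counted as an edge of $T_{K}$ and what happens when $w(e)$ is compared with $w(e_{K})$; both issues are cleanly handled by the unique-weights assumption, together with the strict inequality $w(e)>w(e_{K})$ established above. Apart from this mild convention question, the argument is essentially immediate from the monotone, merge-only dynamics of Kruskal's algorithm, so I do not expect a substantive obstacle.
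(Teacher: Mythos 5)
Your proof is correct and takes essentially the same route as the paper's: both arguments rest on the fact that Kruskal's algorithm examines edges in increasing weight order, so an edge of $\partial_{A} T_{K}$ must still be unexamined when the algorithm terminates and is therefore heavier than every (examined) edge of $T_{K}$. The only difference is that you explicitly justify why a frontier edge cannot have been examined (acceptance would make it internal to $T_{K}$; rejection would force its outer endpoint into $T_{K}$), a step the paper asserts without detail.
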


\begin{proof}
Because Kruskal's algorithm examines edges that are monotonically 
increasing in weight as the algorithm proceeds, at the time
Kruskal's algorithm terminates, the edges that are not examined must 
be higher weight than edges that have been examined.  Since edges 
in the allowed frontier $\partial_{A} T_{K}$ cannot be in $T_{K}$, this means that edges in 
$T_{K}$ have been examined at the time of Kruskal's algorithm termination, 
whereas edges in $\partial_{A} T_{K}$ have not. Thus edges in $\partial_{A} T_{K}$ 
must be higher weight than edges in $T_{K}$.
\end{proof}

\begin{lemma}
For any given cycle $c \in \mathcal{C}$ for which all edges of $c$ are examined 
in Kruskal's algorithm, the edge in this cycle that is examined last must be 
the highest weight edge in this cycle.
\end{lemma}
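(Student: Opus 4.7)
The plan is to read off the conclusion directly from step (1) of Kruskal's algorithm, which sorts $E$ into a list $L$ of strictly increasing weight and then examines the edges of $L$ in that order. Since by hypothesis the weights on $E$ are all distinct, the examination order on any subset of $E$ is totally determined by weight.

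Concretely, let $c \in \mathcal{C}$ be a cycle all of whose edges are examined by Kruskal's algorithm before termination, and let $e^\star \in c$ be the edge of $c$ examined last. I would argue by contradiction: suppose there exists $e' \in c$ with $w(e') > w(e^\star)$. Because edges are drawn from $L$ in increasing weight order, and both $e'$ and $e^\star$ are examined, $e'$ is removed from $L$ strictly after $e^\star$. This contradicts the choice of $e^\star$ as the last edge of $c$ to be examined. Hence no such $e'$ exists, and $e^\star$ must be the maximum-weight edge of $c$.

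The only subtle point worth flagging is that Kruskal's algorithm can terminate before exhausting $L$, so not every edge of a given cycle $c$ need be examined in general; the hypothesis ``all edges of $c$ are examined'' is essential, since it guarantees in particular that the maximum-weight edge of $c$ is among those drawn from $L$. With that hypothesis in hand there is no real obstacle — the proof is essentially one line invoking the monotone examination order. I would keep the write-up short, state it as a direct consequence of the sorting in step (1), and not introduce any auxiliary structure.
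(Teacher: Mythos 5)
Your proof is correct and follows essentially the same route as the paper's: both rest on the single observation that Kruskal's algorithm examines edges in strictly increasing weight order, so the last-examined edge of any fully examined finite set must be its maximum-weight element. Your contradiction wrapper and your remark on why the ``all edges examined'' hypothesis is needed are sound but add nothing beyond the paper's one-line argument.
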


\begin{proof}
Because Kruskal's algorithm examines edges that are monotonically 
increasing in weight as the algorithm proceeds, for the finite set of 
edges in $c$, the highest weight edge in this set will be examined 
last in Kruskal's algorithm.
\end{proof}

\begin{corollary}
As a corollary, the Kruskal's wrapping edge $e_{K} \in E$ has higher weight 
than any edge in the Kruskal's MTISC $T_{K}$.
\end{corollary}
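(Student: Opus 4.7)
The plan is to read off the corollary directly from the monotone-weight scanning order used by Kruskal's algorithm. Every edge in $T_{K}$ was examined (and accepted) before the algorithm terminated, while $e_{K}$ is by definition the allowed terminating edge examined in the very last step. Since step (1) of Kruskal's algorithm sorts $L$ by increasing weight and edges are drawn from $L$ in that order, anything examined strictly earlier than $e_{K}$ has strictly smaller weight than $e_{K}$. Combining these two observations gives $w(e)<w(e_{K})$ for every $e\in T_{K}$, which is the corollary.

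One could alternatively stage the argument as a genuine consequence of Lemma 2, as the environment label ``Corollary'' suggests. In that route, I would consider the unique cycle $c$ formed when $e_{K}$ is added to the forest produced by Kruskal's; all edges of $c$ have been examined by termination (the non-$e_{K}$ edges belong to $T_{K}$, and $e_{K}$ itself is the examined terminator), and $e_{K}$ is examined last among them, so Lemma 2 yields that $e_{K}$ has the largest weight in $c$. This version, however, only directly controls weights on the single cycle through $e_{K}$, so to conclude the statement for \emph{any} edge of $T_{K}$ one still has to invoke the sorted-weight order of Kruskal's algorithm on the remaining edges. For clarity I would therefore present the direct monotonicity proof and add one sentence noting that Lemma 2, specialized to the cycle created by $e_{K}$, is essentially the same observation.

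There is no real obstacle here; the only substantive choice is stylistic — whether to argue from the global ``edges are scanned in increasing weight order'' property or from Lemma 2 applied to the terminating cycle. I would opt for the former because it gives the full statement in a single line, and because the former argument is what actually underlies Lemma 2 itself.
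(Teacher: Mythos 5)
Your primary argument is exactly the paper's proof: $e_{K}$ is the last edge examined, edges are examined in order of increasing weight, so $e_{K}$ outweighs every previously examined edge, in particular every edge of $T_{K}$. Correct and essentially identical in approach; the Lemma-2 detour you mention is unnecessary, as you yourself conclude.
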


\begin{proof}
Because $e_{K}$ is the last edge examined during the running 
of Kruskal's algorithm, it must have a weight higher than every other edge that 
is examined during the running of the algorithm, including every edge in $T_{K}$.
\end{proof}

\begin{lemma} 
From lemma 2, for any edge $e$ in the forbidden frontier of $T_{K}$, where adding 
$e$ to $T_{K}$ would form a forbidden cycle $c_e$, $e$ must be the highest 
weight edge in the forbidden cycle $c_e$.
\end{lemma}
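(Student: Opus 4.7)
The plan is to prove the claim by case analysis according to whether the edge $e$ was examined during the run of Kruskal's algorithm. The forbidden cycle $c_e$ is the union of $e$ with the unique path $P \subseteq T_K$ joining the two endpoints of $e$, so it suffices to show $w(e) > w(e')$ for every $e' \in P$.

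First I would handle the case in which $e$ is examined. Since $e \notin T_K$, the examination ended in rejection, which by step (3) of Kruskal's algorithm forces the two endpoints of $e$ to already lie in a common tree of the Kruskal forest at that moment. Because forest components only grow as the algorithm proceeds (trees are merged, never split), that tree is already contained in the final tree $T_K$. The cycle-completing path inside this subtree is therefore also a path inside $T_K$, and by uniqueness of paths in a tree it coincides with $P$. Hence every edge of $P$ was added to the forest, and so examined, before $e$ was examined. All edges of $c_e$ are then examined with $e$ examined last, and Lemma 2 immediately yields that $w(e)$ is the maximum weight in $c_e$.

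Next I would handle the case in which $e$ is not examined. Kruskal's terminates the instant the wrapping edge $e_K$ is processed, so every unexamined edge has weight strictly greater than $w(e_K)$, giving $w(e) > w(e_K)$. The Corollary to Lemma 2 states that $w(e_K)$ exceeds the weight of every edge of $T_K$, and in particular of every edge of $P$. Combining these inequalities gives $w(e) > w(e')$ for every $e' \in P$, as required.

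The only step that needs real care is the observation in the first case that the forest-component containing $e$'s endpoints at the instant $e$ is examined is already a subtree of the eventual $T_K$; once this is pinned down, uniqueness of tree paths identifies it with $P$ and Lemma 2 closes the argument. This monotonicity of the Kruskal forest is the single input beyond Lemma 2 and its Corollary, and the remainder of the proof is routine bookkeeping about the weight-ordered examination sequence of Kruskal's algorithm.
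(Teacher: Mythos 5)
Your proof is correct and follows essentially the same route as the paper's: every edge of $c_e\backslash\{e\}$ lies in $T_{K}$ and is therefore examined before $e$ in Kruskal's weight-ordered sweep, so $e$ is the heaviest edge of the cycle. Your explicit split into the examined/unexamined cases, together with the monotone growth of the Kruskal forest, merely makes rigorous what the paper's one-sentence argument leaves implicit.
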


\begin{proof}
The criteria for an edge $e$ to be in the forbidden frontier $\partial_{F} T_{K}$ is
that adding $e$ to $T_{K}$ would form a forbidden cycle $c_e$.  This
means that every edge in $c_{e}\backslash\{e\}$ is in $T_{K}$ and is
examined before edge $e$ during Kruskal's algorithm.  Thus, edge $e$ will
have higher weight than any other edge in cycle $c_e$.
\end{proof}

Using these lemmas that we have established, we will show that each
realization of the two-step method can be categorized into one of three distinct 
cases, with cases 1 and 2 yielding $T_{2} = T_{K}$.  In the third case, we see
that there are no common edges or vertices between $T_{P}$ and $T_{K}$, so
$T_{2} \ne T_{K}$ in this case.

\begin{proofcase}
If the Prim's origin $v_0$ is a vertex in the Kruskal's MTISC $T_{K}$, then the Prim's
MTISC $T_{P}$ will be identical to the Kruskal's MTISC; $T_{P} = T_{K}$. 
\end{proofcase}

\begin{proof}
\item Since Prim's algorithm grows a tree $T_{g}$ by examining and adding edges
adjacent to $T_{g}$, it will be crucial for us to pay close attention to the frontier of
$T_{g}$, $\partial T_{g}$.  This frontier is sorted to form the Prim's queue from which 
edges are selected and considered for addition to $T_{g}$.  In this first case, as long as 
we are only adding to $T_{g}$ vertices that are also in $T_{K}$, the Prim's queue will consist 
entirely of edges that are either in $T_{K}$ or $\partial T_{K}$. The fulfillment of this 
condition,
\begin{equation}
	\label{case1-condition}
	\partial T_{g} \subseteq T_{K} \cup \partial T_{K} \ ,
\end{equation}
will be shown in the remainder of the proof.  

\item From lemma 1, when the frontier of the Prim's tree is sorted in Prim's algorithm 
to form the Prim's queue, any edges that are also in $T_{K}$ will be 
placed before (having lower weight than) those edges in the allowed 
frontier of $T_{K}$.  This means that all edges in $T_{K}$ will be examined and added 
to $T_{g}$ earlier in Prim's algorithm than edges in the allowed frontier of $T_{K}$.
As the Prim's origin $v_0$ is in $T_{K}$, we are guaranteed to
have at least one edge from $T_{K}$ added to the Prim's queue at the
start of the Prim's growth.  Furthermore, since $T_{K}$ is connected, as edges 
from $T_{K}$ are added to $T_{g}$ through the Prim's growth, more edges from 
$T_{K}$ and its frontier $\partial T_{K}$ will be added to the Prim's queue,
ensuring that $T_{g}$ has more edges from $T_{K}$ to add during further steps of Prim's
algorithm.  $T_{g}$ begins as a single vertex $v_0$ and grows to include more and
more edges and vertices from $T_{K}$ as Prim's algorithm proceeds.

\item Though we have shown that edges in $T_{K}$ will be examined during
Prim's algorithm before any edges in the allowed frontier of $T_{K}$, we 
cannot say the same about edges in its forbidden frontier, $\partial_{F} T_{K}$, which may 
also be in the Prim's queue and in consideration for selection during the Prim's 
growth.  If edge $x=(a,b) \in \partial_{F} T_{K}$ is in the Prim's queue, from lemma 4 we
know that both vertices $a$ and $b$ are in $T_{K}$.  If we call $c_x$ the forbidden cycle 
that would be created in $T_{K}$ by adding edge $x$ to $T_{K}$, we also know from lemma 4 
that $x$ has higher weight than every other edge in the forbidden cycle $c_x$. 

\item Thus, all edges in $c_{x}\backslash\{x\}$ will be examined in
Prim's algorithm and added to $T_{g}$ before $x$ is examined.  So 
when $x$ is finally examined in Prim's algorithm, $x$ will be a forbidden edge 
for $T_{g}$ and will not be added to $T_{g}$.  
This ensures that any forbidden cycles for $T_{K}$ will be handled in 
the same manner in both Kruskal's and Prim's algorithms; no edges in the forbidden 
frontier of $T_{K}$ will be added to $T_{g}$.

\item Using corollary 3, we can see that until the Kruskal's wrapping edge $e_{K}$ is 
examined in Prim's algorithm, Eq.~(\ref{case1-condition}) will remain satisfied and all of 
the edges and vertices in $T_{K}$ will be added to $T_{g}$ before $e_{K}$ is 
examined.  Furthermore, as we have shown, none of the edges (allowed or forbidden) in the frontier 
of $T_{K}$ will be added to the growing Prim's tree $T_{g}$ before the Kruskal's wrapping edge $e_{K}$ 
is examined.  In this case, $e_{K}$ will also serve as the Prim's wrapping edge $e_{P}$.  Since the 
condition~(\ref{case1-condition}) is satisfied at the time $e_{K} = e_{P}$ is examined 
in Prim's algorithm (terminating the algorithm by adding an allowed cycle), $T_{P} = T_{K}$. 
In this case, the Kruskal's trimming step of the two-step method is unnecessary,
and the two-step method will yield $T_{2} = T_{K}$.
\end{proof}

\begin{proofcase}
If the Prim's origin $v_0$ is not in the Kruskal's MTISC $T_{K}$ but the Prim's tree $T_{g}$ 
``grows'' to include any vertices of $T_{K}$, then the Prim's MTISC $T_{P}$ will include all
of the Kruskal's MTISC $T_{K}$, i.e., $T_{P} \supset T_{K}$.
\end{proofcase}

\begin{proof}
\item If we start Prim's algorithm from an origin $v_0$ that is not a vertex 
in the Kruskal's MTISC $T_{K}$, we can define bridge edge $b=(u_b,v_b)$ as the first edge added 
to the growing Prim's tree $T_{g}$ that introduces a vertex of $T_{K}$ into $T_{g}$.  
So $b=(u_b,v_b)$ is the first edge added to $T_{g}$ for which either 
$u_b$ or $v_b$ are in $T_{K}$.  

\item At the point in Prim's algorithm when $b$ is added to $T_{g}$, 
$b$ has a weight lower than any edge in the Prim's queue.
Otherwise, edge $b$ would not have been selected for addition to the Prim's tree at
that time.

\item Examining edge $b$ in the frame of the Kruskal's MTISC $T_{K}$, we know that edge $b$ 
is in the allowed frontier of $T_{K}$ since only one of $u_b$ or $v_b$ is in $T_{K}$.  
This means that $b$ cannot be in $T_{K}$ itself, nor can $b$ be in the forbidden frontier 
of $T_{K}$.  Because $b$ is in the allowed frontier of $T_{K}$, it has a weight higher than
any edge in $T_{K}$, by lemma 1.  Thus, Prim's algorithm will continue as in case 1, 
with growth continuing from this first vertex of $T_{K}$ (either $u_b$ or $v_b$)
that is introduced to $T_{g}$.  Any edges from $T_{K}$ that are added to the Prim's 
queue will be added to the front (lower weight end) since these edges will all have a 
weight lower than $w(b)$, whereas every edge in the Prim's queue thus far has 
weight higher than $w(b)$.

\item Prim's algorithm will add edges from the Kruskal's MTISC $T_{K}$ until reaching termination 
with the examination of the Kruskal's wrapping edge $e_{K}$, which in this case will also be the
Prim's wrapping edge $e_{P}$.  At this point $T_{P} \supset T_{K}$.  In this case, the Kruskal's algorithm
portion of the two-step method will serve to trim from the Prim's tree the region near the origin $v_0$ up to
the bridge $b$, leaving $T_{2} = T_{K}$.
\end{proof}

\begin{proofcase}
If the Prim's origin $v_0$ is not in the Kruskal's MTISC $T_{K}$ and the growing Prim's tree $T_{g}$ 
does not ``grow'' to include any vertices of $T_{K}$ before Prim's algorithm terminates, then
the Prim's MTISC and the Kruskal's MTISC will not intersect.  In other words,
$T_{P} \cap T_{K} = \emptyset$.
\end{proofcase}

\begin{proof}
\item In this case, during Prim's algorithm, the Prim's wrapping edge $e_{P}$ is examined before 
any vertices of the Kruskal's MTISC $T_{K}$ are examined or added to the Prim's tree.  The 
algorithm terminates with $T_{P} \cap T_{K} = \emptyset$.  In this case, the two-step method
will yield $T_{2} \ne T_{K}$.  However, this case is uncommon, and it is seen by direct observation
that $T_{2}$ and $T_{K}$ have similar scaling properties.
\end{proof}

Taking all three cases for the two-step method, we can say that either the Prim's MTISC $T_{P}$ contains
or is equal to the Kruskal's MTISC $T_{K}$, or the Prim's and Kruskal's MTISCs do not intersect.
Symbolically, $(T_{P} \supseteq T_{K}) \vee (T_{P} \cap T_{K} = \emptyset)$.
In cases 1 and 2, where $T_{P}$ and $T_{K}$ do overlap, the two-step MTISC $T_{2}$ will be equal
to the Kruskal's MTISC $T_{K}$.  So in these cases the two-step method yields the same result 
as that obtained from running Kruskal's algorithm on the entire graph $G$.
We see that $T_{2} = T_{K}$ in all cases except the case of multiple disjoint ISCs, 
where the Prim's MTISC $T_{P}$ (and by construction the two-step MTISC $T_{2}$ as well) does not 
intersect with the Kruskal's MTISC $T_{K}$.

\subsection{Fitting region for ${\chi}^{2}$}{\label{appendix:fitting_region}}

	Here we will discuss the data cutoffs we impose in order to restrict exactly what region in $s$ 
	of data we want to fit.  These data cutoffs allow us to reduce overfitting errors in our ${\chi}^{2}$ 
	fitting routine (Eq.~(\ref{eqn:custom-chisquared-definition})) for estimating the fractal dimension $d_{s}$.  
	We consider a small $s$ (lower) cutoff by implementing $s_{l}$ as the minimum $s$ value allowed 
	into the fitted data.  We also consider a large $s$ (upper) data cutoff by enforcing an 
	$\omega_{u}$ as the maximum allowed value for any of the scaled data in the fit.  Introducing an 
	$\omega_{u}$ cutoff seemed natural because $\omega \gg 1$	corresponds to 
	paths that are much longer than $L^{d_{s}}$ and are rare.
	Since these long paths are less frequent, data points with a large $\omega$ value
	have higher statistical uncertainties than those with smaller $\omega$ values.  
	For small $s$ we consider lattice effects. It made sense to use $s_{l}$ as a low cutoff 
	since we have small discrete $s$ values (steps) in the paths.  Using $\omega_{l}$ 
	would also have been a viable option, but it is easier to interpret the physical
	meaning of $s_{l}$, since one unit of $s$ corresponds to the lattice spacing for 
	all system sizes.

	To determine reasonable values to use for the upper and lower data cutoffs,
	we tested our ${\chi}^{2}$ fitting routine with various values of these cutoffs and 
	assessed how well the minimum value of ${\chi}^{2}$ agreed with a predicted 
	estimate ${\chi}_{p}^{2}$.  To estimate ${\chi}_{p}^{2}$, it is instructive to envision
	comparing data sets from two different system sizes on scaled axes $\rho$ vs.~$\omega$, 
	as in Fig.~\ref{fig:single-pair}(a).  We use a discrete set $\{\omega_{k}^0\}$, consisting of 
	$n$ uniformly spaced values of $\omega$, to calculate a set of $\rho(\omega_{k}^0)$ values for each of 
	the two data sets.  The comparison between these $\rho$ values goes into the calculation of 
	${\chi}_{2}$, as outlined in Eq.~(\ref{eqn:custom-chisquared-definition}).  
	
	One can imagine breaking the $\omega$ axis up into $b$ segments or ``boxes'' of length equal to 
	the scaled correlation length $\ell$.  Each box will contain some number $n_b$ of data points from the 
	set of $n$ points in the $\{\omega_{k}^0\}$ discrete points we use to calculate the goodness of the collapse.  
	In this way, we can estimate ${\chi}^{2}_{p}$ piece by piece, calculating separately the contribution
	${\chi}_{b}^{2}$ from each of the $b$ boxes:  
	\begin{equation}
		{\chi}_{p}^{2} = {\chi}_{b}^{2} b \ .
	\end{equation}
	Further, we can estimate $n_b$, the number of data points that fall in each of these boxes.  
	This allows us to subdivide the ${\chi}_{b}^{2}$ into contributions from each individual data point, ${\chi}_{1}^{2}$:
	\begin{equation}
		\label{eqn:chisquared-predicted-partition}
		{\chi}_{p}^{2} = {\chi}_{1}^{2} n_b b \ .
	\end{equation}

	Now that we have partitioned ${\chi}_{p}^{2}$ by this relation, we can calculate
	${\chi}_{1}^{2}$, $n_b$, and $b$ individually.  First, we can estimate ${\chi}_{1}^{2}$, the ${\chi}^{2}$ 
	contribution from a single data point in the set of $\{\omega_{k}^0\}$ values, by its expectation value 
	$E({\chi}_{1}^{2})$.  We can write, using the form of Eq.~(\ref{eqn:custom-chisquared-definition}),
	\begin{equation}
		\label{eqn:chisquared-predicted-single-point}
		\begin{split}
			 {\chi}_{1}^{2} &\approx E({\chi}_{1}^{2}) \\
			 &\approx \frac{E([{\rho}_{1} - {\rho}_{2}]^{2})}{\ell [{\sigma}_{1}^{2} + {\sigma}_{2}^{2}]} \\
												 &\approx \frac{E({\rho}_{1}^{2}) + E({\rho}_{2}^{2})}{\ell [{\sigma}_{1}^{2} + {\sigma}_{2}^{2}]} \\
												 &\approx \frac{{\sigma}_{1}^{2} + {\sigma}_{2}^{2}}{\ell [{\sigma}_{1}^{2} + {\sigma}_{2}^{2}]} \\
												 &\approx \frac{1}{\ell} \ ,
		\end{split}
	\end{equation}
	where $\rho_1$($\rho_2$) is the $\rho$ value taken from data set 1(2), and ${\sigma}_{1}^{2}$(${\sigma}_{2}^{2}$) is
	the variance at this point for data set 1(2).

	Next we can write an expression for the number of data points per box, $n_b$, by multiplying the length
	of each box, $\ell$, by the density of data points,
	\begin{equation}
		\label{eqn:nb}		
		n_b = \ell \left( \frac{n}{\omega_{u}-\omega_{l}} \right) \ ,
	\end{equation}
	where again $n$ is the total number of data points in the set $\{\omega_{k}^0\}$ being used for the ${\chi}^{2}$
	calculation and ${\omega_{u}-\omega_{l}}$ gives the full allowed range of ${\omega}_{k}^0$ values.

	Finally, we compute the total number $b$ of boxes, $b$, by dividing the
	range ${\omega_{u}-\omega_{l}}$ into two regions, $\omega < 1$
	and $\omega > 1$.  This allows us to separately compute the number of boxes in
	each of these regions, $b_{<}$ and $b_{>}$ respectively.  Of course, $b = b_{<} + b_{>}$.
	Expressing $b_{>}$ is trivial, since in this region the scaled correlation length is constant 
	at $\ell = A$ (as given by our model in Eq.~(\ref{eqn:correlation-length-model})),
	where $A$ is the proportionality constant relating $\ell$ and $\omega$ by $\ell = A \omega$.
	Thus we write
	\begin{equation}
		\label{eqn:b-greater}
		b_{>} = \frac{{\omega}_{u} - 1}{A} \ .
	\end{equation}
	Writing $b_{<}$ is not quite as simple. It may help to
	begin at $\omega = 1$ and think of stepping left toward lower values of $\omega$ and
	eventually to ${\omega}_{l}$, counting up the number of boxes we step across.  The boundary 
	of the first box will occur at $\omega = (1-A)$, since the scaled correlation length $\ell$ 
	is equal to $A$ around $\omega = 1$.  Likewise, the edge of the second box we approach will 
	fall at $\omega = (1-A)^{2}$, the $i^{\text{th}}$ box will reach to 
	$\omega = (1-A)^{i}$, and the final ${b}_{<}^{\text{th}}$ box will fall 
	at the lowest allowed $\omega$ value, ${\omega}_{l}$.  So we can write
	\begin{equation}
			 {\omega}_{l} = \left(1-A\right)^{b_{<}} \ ,
	\end{equation}
	from which is follows that
 	\begin{equation}
		\begin{split}
		\label{eqn:b-less}
			b_{<}	 &= \frac{\ln\left({\omega}_{l}\right)}{\ln\left(1-A\right)} \\
						 &\approx \frac{\ln\left({1}/{{\omega}_{l}}\right)}{A} \ .
		\end{split}
	\end{equation}
	Here an approximation is made in the denominator since $A$ is small.

	Combining Eqs.~(\ref{eqn:chisquared-predicted-partition}), (\ref{eqn:chisquared-predicted-single-point}), 
	(\ref{eqn:nb}), (\ref{eqn:b-greater}), and (\ref{eqn:b-less}), we can write an expression for 
	${\chi}_{p}^{2}$ by adding the contributions from both of the two regions we examined, 
	$\omega < 1$ and $\omega > 1$:
	\begin{equation}
		\begin{split}
			{\chi}_{p}^{2} &= \frac{1}{\ell} \left[\ell \left(\frac{n}{\omega_{u}-\omega_{l}}\right)\right] \left[\frac{\ln\left({1}/{\omega_{l}}\right)}{A} \right] \\
												 &+ \frac{1}{\ell} \left[\ell \left(\frac{n}{\omega_{u}-\omega_{l}}\right)\right] \left[ \frac{{\omega}_{u} -1}{A} \right] \\
												 &= \frac{1}{A} \left[\frac{n}{\omega_{u}-\omega_{l}}\right]  \left[\ln\left(\frac{1}{\omega_{l}}\right) + {\omega}_{u} - 1 \right] \ .
		\end{split}
	\end{equation}
	In general, this proportionality constant $A$ will depend on what
	method is used to measure correlation length.  For this reason,
	we look at the ratio ${{\chi}_{p}^{2}}/{{\chi}^{2}}$. Since
	both ${\chi}_{p}^{2}$ and ${\chi}^{2}$ have a ${1}/{A}$
	dependence, the ratio of the two is independent of $A$.
	
	We use this ratio ${\chi}_{p}^{2}/{\chi}^{2}$ to get an idea of the
	effect various upper and lower data cutoffs have on our fit.  
	We plot an example case of this analysis for systems of size $L=512$ 
	and $L=1024$ in dimension $d=2$.  In Figs.~\ref{fig:cutoffs}(a) and 
	\ref{fig:cutoffs}(b) it is apparent that below a certain $s_{l}$ threshold, 
	lattice effects skew the value of ${\chi}^{2}$ as well as the value of $d_{s}$. 
	Meanwhile, Figs.~\ref{fig:cutoffs}(c) and \ref{fig:cutoffs}(d)
	indicate that we need not impose an upper cutoff on the data.
	The high variance of points in the large $\omega$ region of the data 
	already appropriately weights these points.  For our final analysis,
	we chose $s_{l} = 425$ for two dimensions, $s_{l} = 375$ for three dimensions,
	and $s_{l} = 100$ for four and five dimensions. 
\begin{figure*} 
	\includegraphics[%
		width=\doublefigwidth, 
		keepaspectratio]{./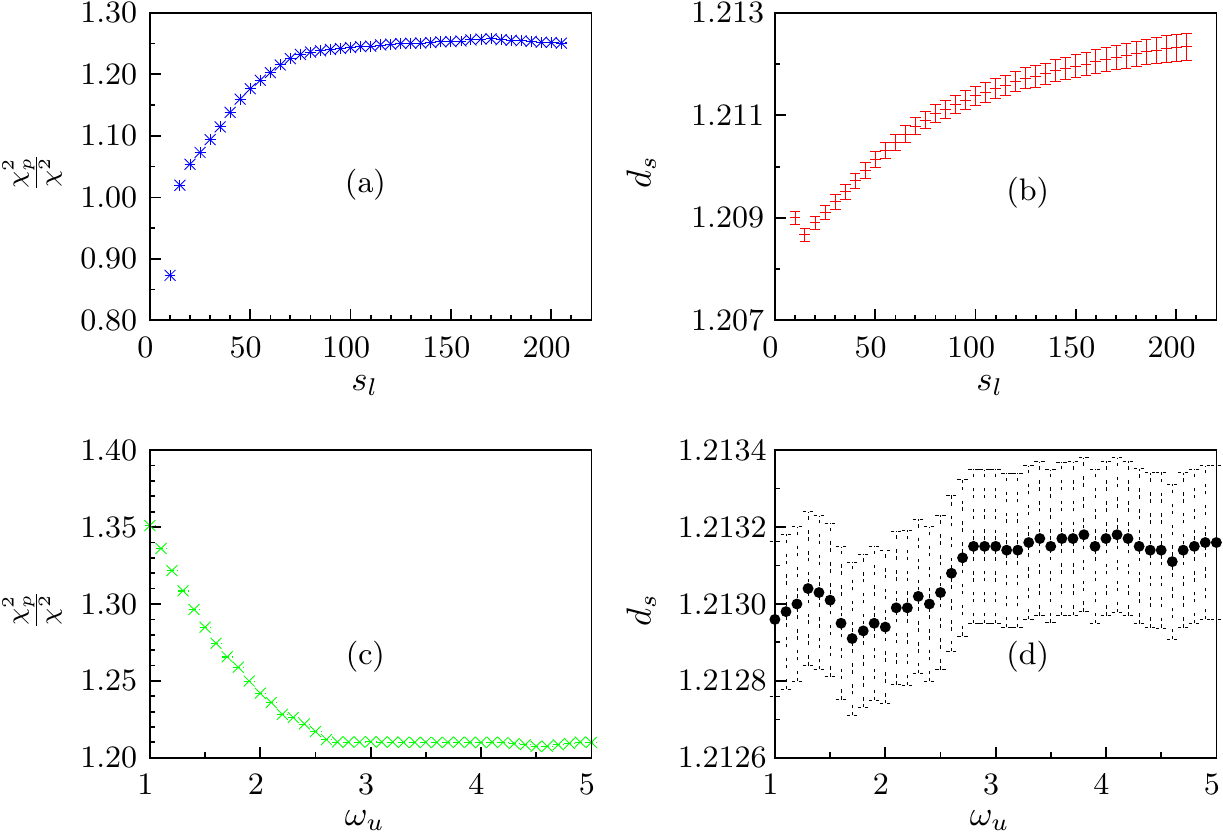}
	\caption{Examining the effects of the lower and upper data
		cutoffs on a collapse of two systems of sizes $L=512$ and $L=1024$ in 
		dimension $d=2$. (a) and (b) 
		display ${{\chi}_{p}^{2}}/{{\chi}^{2}}$ and $d_{s}$ as
		functions of the lower (small $s$) data cutoff $s_{l}$, 
		while (c) and (d) 
		show ${{\chi}_{p}^{2}}/{{\chi}^{2}}$ and $d_{s}$ as
		functions of the upper (large $s$) data cutoff ${\omega}_{u}$ for $s_{l} = 100$.
		Both ${\chi}_{p}^{2}$ and ${\chi}^{2}$ are measured at the value
		of $d_{s}$ for which ${\chi}^{2}$ is minimized.
	}
	\label{fig:cutoffs}
\end{figure*}

\bibliography{SM_biblio} 
\bibliographystyle{prsty}

\end{document}